		\newtheorem{theorem}{Theorem}[section]
		\newtheorem{lemma}[theorem]{Lemma}
		\newtheorem{corollary}[theorem]{Corollary}
		\theoremstyle{definition}
		\newtheorem{definition}{Def.}
		\theoremstyle{remark}
\begin{document}

		\title{A Fast Data Driven Shrinkage of Singular Values for Arbitrary Rank Signal Matrix Denoising}
		\author{S.~K. Yadav, \IEEEmembership{Student Member,~IEEE,}
		        R.~Sinha,~\IEEEmembership{Member,~IEEE,}
		        and~P.~K.~Bora,~\IEEEmembership{Member,~IEEE}
		
		\IEEEcompsocitemizethanks{\IEEEcompsocthanksitem Authors are with the Department of Electronics and Electrical Engineering, Indian Institute of Technology Guwahati, Guwahati-781039, Assam, India.\protect\\
		E-mail: \{y.santosh,~rsinha,~prabin\}@iitg.ernet.in}}

		\markboth{}{Yadav \MakeLowercase{\textit{et al.}}: Data Driven Shrinkage of Singular Values}
	
	\IEEEtitleabstractindextext{%
	\begin{abstract}
	\boldmath
    Recovering a low-rank signal matrix from its noisy observation, commonly known as matrix denoising, is a fundamental inverse problem in statistical signal processing. Matrix denoising methods are generally based on shrinkage or thresholding of singular values with a predetermined shrinkage parameter or threshold. However, most of the existing adaptive shrinkage methods use multiple parameters to obtain a better flexibility in shrinkage. The optimal value of these parameters using  either cross-validation or Stein's principle. In both the cases, the iterative estimation of various parameters render the existing shrinkage methods computationally latent for most of the real-time applications. This paper presents an efficient data dependent shrinkage function whose parameters are estimated using Stein's principle but in a non-iterative manner, thereby providing a comparatively faster shrinkage method. In addition, the proposed estimator is found to be consistent with the recently proposed asymptotically optimal estimators using the results from random matrix theory. The experimental studies on artificially generated low-rank matrices and on the magnetic resonant imagery data, show the efficacy of the proposed denoising method.
	\end{abstract}
	
	\begin{IEEEkeywords}
	Singular value shrinkage, Linearly expandable threshold, SURE, Matrix Denoising.
	\end{IEEEkeywords}}
		
	\maketitle
	\IEEEdisplaynontitleabstractindextext
	\IEEEpeerreviewmaketitle

    \section{Introduction}
    \IEEEPARstart THE estimation of correlated signals from their noisy or incomplete observations has various applications in multivariate statistical signal processing~\cite{AndrewPattarson1976,klema1980singular,Scharf1991,Jolliffe:2002,MaBarziger2014}, machine learning~\cite{Vempala2007,MazumderHT10}, wireless communication~\cite{EdforsSBWB98,ChenWEH10}, etc., and has gained a lot of attention, recently. Specifically, it aims at recovering a low-rank signal matrix $\mathbf{X}\in\mathbb{R}^{n\times m}$ with a rank $r<<L=\min(n,m)$ from its noisy observation $\mathbf{Y}$ given as
    \begin{equation}
    \mathbf{Y}=\mathbf{X}+\mathbf{W},
    \label{eqn:data-model}
    \end{equation}
    where $\mathbf{W}$ is a random noise matrix. Following this data model, a number of methods have been proposed in literature for such tasks~\cite{Candes_Recht_2009,Cai_Candes_Shen2010,Zachariah2012,Chatterjee2012,Deledalle2012,Candes2013,Shabalin2013,Nadakuditi2014,
    Gavish2014,GavishDonoho2014,Josse2014,SVLT2015,ParekhSelesnic2016,Bigot2016}. In almost all of those methods, the observed data is first decorrelated using matrix factorization techniques, the signal and the noise components are identified, and then the noise components are removed to obtain an estimate of the underlying true signal matrix. Mostly, the singular value decomposition (SVD) is employed for matrix factorization. If the rank of underlying signal matrix is given, the well known Eckart-Young-Mirsky (EYM)~\cite{eckart1936approximation} theorem can be used to obtain the optimal $r$-rank approximation of $\mathbf{Y}$ as
    \begin{equation}
    \mathbf{\widehat{Y}}^{\text{EYM}}_{r}= \underset{\text{Rank}(\mathbf{Z})=r}{\operatorname{\arg\min}}\Big\|\mathbf{Y-Z}\Big\| := \sum_{i=1}^{r}y_i\tilde{\mathbf{u}}_i\tilde{\mathbf{v}}^{\smaller{T}}_i,
    \label{eqn:eym}
    \end{equation}
    where $\lVert\cdot\rVert$ is a unitary-invariant matrix norm. If $\mathbf{W}$ has independent and identically distributed (i.i.d.) Gaussian entries then $\mathbf{\widehat{Y}}^{\text{EYM}}_{r}$ is $r$-rank maximum likelihood estimate (MLE) of $\mathbf{X}$. Nevertheless, one can quest for the estimators, possibly the nonlinear ones, which incur lower mean-squared error (MSE) than MLE~\cite{stein1956,james1961}. The EYM uses no additional structures on matrix $\mathbf{X}$ other than low-rankness. The better estimation methods exploiting the additional structures such as non-negativity~\cite{Markovsky08,MarkovskyU13}, sparsity~\cite{JenattonOB10}, Hankel~\cite{LiLR97}, Toeplitz~\cite{FangY92a}, etc. also exist. In matrix denoising, the objective is to estimate the signal matrix such that $\text{MSE}(\mathbf{X},\widehat{\mathbf{X}}) =\mathbb{E}\|\mathbf{X}-\widehat{\mathbf{X}}\|_F^2$ is minimized. Surprisingly, without proper regularization, the EYM theorem as a procedure of finding the best low-rank representation tries to minimize $\text{MSE}(\mathbf{Y},\widehat{\mathbf{X}})$ and hence it can not be relied upon for faithful denoising. In practice, the rank $r$ is not known \emph{a priori} and usually estimated by using information theoretic criteria \cite{AIC,BIC,Wax,Hurvich} which further affect the accuracy of the EYM estimates. A natural alternative is to solve the regularized low-rank problem
    \begin{equation}
    \widehat{\mathbf{X}} = \underset{\mathbf{X}}{\operatorname{\min}}\|\mathbf{Y}-\mathbf{X}\|_F^2 + \lambda~\text{Rank}(\mathbf{X}),
    \label{eqn:rank-reg}
    \end{equation}
    where $\lVert\cdot\rVert_F$ denotes the Frobenius norm and $\lambda$ is a regularization parameter. For the non-convexity of the rank function, the close form solution of (\ref{eqn:rank-reg}) is not possible. As a heuristic solution, the singular value hard-thresholding (SVHT) estimator is often adopted. The SVHT is given as
    \begin{equation}
    \mathbf{\widehat{X}}^{\text{SVHT}}_{\mu}=\sum_{i=1}^{L}y_i\mathbf{1}
    _{(y_i>\mu)}\tilde{\mathbf{u}}_i\tilde{\mathbf{v}}^{\smaller{T}}_i,
    \label{eqn:svht}
    \end{equation}
    where $\mathbf{1}_{(y_i>\mu)}=1$ if $y_i>\mu$ and zero otherwise. The above optimization problem can be easily solved if the rank function is replaced by the nuclear norm, which is convex hull of the rank function. Such a relaxation guarantees the existence of global minima which can be obtained by sub-gradient methods. In ~\cite{CaiCS10}, it has been shown that the singular value soft-thresholding (SVST) estimator
    \begin{equation}
    \mathbf{\widehat{X}}^{\text{SVST}}_{\lambda}=\sum_{i=1}^{L}(y_i-\lambda)_{+}\tilde{\mathbf{u}}_i\tilde{\mathbf{v}}^{\smaller{T}}_i
    \label{eqn:svst},
    \end{equation}
    where $(\alpha)_{+} = \max(\alpha, 0)$ is the solution to the modified convex problem. Clearly, the SVST not only thresholds the singular values smaller than $\lambda$ but it also applies the shrinkage by this amount. The only difficulty that remains is in the selection of the optimal $\lambda$. It determines the extent of shrinkage applied and hence the trade-off between the bias and the variance. Higher values of $\lambda$ result in smaller variance but introduce larger bias, whereas lower values of $\lambda$ cause smaller bias but at the cost of larger variance. One may naturally choose a value of $\lambda$ that admits the minimum MSE. As $\mathbf{X}$ is unknown, the MSE is an unobservable quantity. Fortunately, due to Stein's lemma, in case of Gaussian noise, it is possible to construct an unbiased estimator of MSE that is free from the direct dependence on $\mathbf{X}$ with only mild assumptions on candidate estimators. Such estimators are commonly known as Stein's unbiased risk estimator (SURE). In~\cite{Candes2013}, the authors showed that the SVST follows these assumptions and proposed to determine its parameter using SURE. For involving only a single parameter, the SVST approach has very low complexity however suffers from two drawbacks. Firstly, for the shrinkage and thresholding being coupled, it is unable to provide optimal trade-off between the bias and the variance. Secondly, it shrinks all the dominant singular values equally despite the fact that each of them may not be equally important \cite{wnnm2014}. To deal with the first drawback, a logistic function based shrinkage that decouples the shrinkage and the thresholding by introducing additional parameters was proposed in \cite{SVLT2015}. Towards addressing the second drawback, in \cite{VerbanckJH15}, a non-linear estimator which applies less shrinkage on larger singular values compared to the smaller ones was proposed. Recently, in \cite{JosseS16}, the authors proposed an adaptive trace norm based estimator with two parameters. It deals with both the drawbacks of the SVST and provides more flexible shrinkage. The parameters of these estimators are often selected by minimizing SURE. On account of the inherent non-linearity, the methods employing SURE lack the closed form solution for the optimal value of the parameters. As a result, a grid search over fairly large search range is done for finding the optimal value of the parameters. Thus limiting the use of such estimators in practical applications. In this paper, we propose a linearly expandable singular value thresholding/shrinkage (SVLET) function which admits a closed form solution to corresponding SURE thereby circumventing the limitation of existing shrinkage functions outlined above and reducing the computational cost drastically.

    Apart from SURE-based estimation of parameters, the random matrix theory (RMT)~\cite{Meh2004} provides an alternative by which asymptotic mean squared error (AMSE) $\underset{n \rightarrow\infty} \lim\text{MSE}(\mathbf{X},\widehat{\mathbf{X}})$ can be inferred without the knowledge of matrix $\mathbf{X}$. Following this, the recent works in \cite{Shabalin2013,Nadakuditi2014,Gavish2014,GavishDonoho2014} tried to find the optimal shrinkage functions for matrix denoising but in asymptotic case, where the size of the matrices grows infinitely keeping the rank and the aspect ratio fixed. In \cite{Nadakuditi2014}, a data driven estimator which dominates the EYM estimator is proposed. In \cite{Gavish2014}, the authors showed that the optimal value of hard-threshold is a deterministic quantity equal to $(4/\sqrt{3})\sqrt{n}\sigma$ where $\sigma$ is the standard deviation of Gaussian entries in $\mathbf{W}$. In \cite{GavishDonoho2014}, the same authors proposed a general framework to derive the optimal shrinkage function for a number of loss functions. Due to the limitations of RMT, these methods happen to suppress all the singular values below a predefined threshold and are also limited to the asymptotic case. Departing from this large matrix low-rank setting, our focus is on estimating the signal matrix of finite size which is not necessarily a low-rank. Moreover, using the results from random matrix theory~\cite{Bigot2016}, we show that the proposed estimator maintains its asymptotic optimality in large matrix limits too.

    The rest of the paper is organized as follows. In Section~\ref{sec:MatrixEstimation}, the SVD based matrix denoising problem and its two basic solving methodologies are discussed in detail. The proposed shrinkage method is described in Section~\ref{sec:ProbMethod}. The performance comparison of the proposed and the existing shrinkage methods has been done in Section~\ref{sec:ExPerf}. Finally, the paper is concluded in Section~\ref{sec:Conclu}.

    \begin{table}
    \footnotesize
    \centering
    \caption{Singular Value Decompositions. Symbols bar and tilde over the decomposed entries indicate the noise-only matrix and the noisy (signal -plus-noise) matrix cases, respectively.}
    \label{tab:svd-table}
    \begin{tabular}{lll}
    \toprule
    $\mathbf{X} ~= \mathbf{U}\mathbf{\Sigma}_x\mathbf{V}^{\smaller{T}} =   \sum_{i=1}^{r}x_i\mathbf{u}_i\mathbf{v}^{\smaller{T}}_i$;& $\mathbf{U}^{\smaller{T}}\mathbf{U}=\mathbf{I}_n$, $\mathbf{V}\mathbf{V}^{\smaller{T}}=\mathbf{I}_n$\\
    \midrule
    $\mathbf{Y} ~= \mathbf{\widetilde{U}}\mathbf{\Sigma}_y\mathbf{\widetilde{V}}^{\smaller{T}} = \sum_{i=1}^{L}y_i\tilde{\mathbf{u}}_i\tilde{\mathbf{v}}^{\smaller{T}}_i$;& $\mathbf{\widetilde{U}}^{\smaller{T}}\mathbf{\widetilde{U}}=\mathbf{I}_n$, $\mathbf{\widetilde{V}}\mathbf{\widetilde{V}}^{\smaller{T}}=\mathbf{I}_n$  \\
        \midrule
        $\mathbf{W} = \mathbf{\bar{U}}\mathbf{\Sigma}_w\mathbf{\bar{V}}^{\smaller{T}} =  \sum_{i=1}^{L}w_i\bar{\mathbf{u}}_i\bar{\mathbf{v}}^{\smaller{T}}_i$;& $\mathbf{\bar{U}}^{\smaller{T}}\mathbf{\bar{U}}=\mathbf{I}_n$, $\mathbf{\bar{V}}\mathbf{\bar{V}}^{\smaller{T}}=\mathbf{I}_n$ \\
    \bottomrule
    \end{tabular}
    \end{table}

    \section{Matrix Estimation}
    \label{sec:MatrixEstimation}
    Using the notations for SVD as in Table~\ref{tab:svd-table}, a generalized shrinkage estimator for the data model (\ref{eqn:data-model}) can be given as
    \begin{equation}
    \mathbf{\widehat{X}}_{\eta}=\sum_{i=1}^{L}\eta(y_i)\tilde{\mathbf{u}}_i\tilde{\mathbf{v}}^{\smaller{T}}_i,
    	\label{eqn:gen-shrink}
    \end{equation}
    where $\eta:\mathbb{R_+}\mapsto\mathbb{R_+}$ is a point-wise shrinkage function. Our objective is to obtain an optimal  $\eta$ as
    \begin{equation}
    \eta^* = \underset{\eta}{\operatorname{\min}}\Big\|\sum_{i=1}^{r}x_i\mathbf{u}_i\mathbf{v}^{\smaller{T}}_i-
    \sum_{i=1}^{L}\eta(y_i)\tilde{\mathbf{u}}_i\tilde{\mathbf{v}}^{\smaller{T}}_i\Big\|_F^2.
    \label{eqn:mse-1}
    \end{equation}
    At first sight, it may appear that such a shrinkage function acts on noisy singular values only. However, a careful inspection reveals that the additive random noise also causes the angular deviation in the singular vectors.  In large matrix limits, these angular deviations are deterministic and ultimately depend on the singular values. Thus, the above formulation incorporates the effect of noise on both the singular values and the singular vectors. Moreover, the difficulty in (\ref{eqn:mse-1}) is that the first term is entirely unknown. Fortunately, there exist two ways to proceed further, i.e., we can either invoke RMT or SURE principles.
    \subsection{Random Matrix Theory}
    In order to recover a signal matrix from the observation corrupted by a random matrix using SVD, we need to understand the behavior of the singular values of random matrices and its effect on the singular values of the signal matrix when added with it. For pedagogical reasons, following lemmas are given.
    \begin{lemma}[Generalized Quarter-circle Distribution~\cite{Meh2004,anderson2010introduction}]
    \label{sv-dist} Consider an $n\times m_n$ random matrix $\mathbf{W}$ with elements $W_{ij}\stackrel{i.i.d.}{\sim}\mathcal{N}(0,\sigma^2)$ and the singular values $\{w_1\geq w_2\geq...\geq w_{n}\geq0\}$ where $m_n$ is an increasing sequence of integers dependent on $n$. Consider an asymptotic setting (large matrix limits) as $\underset{n \rightarrow\infty} \lim \frac{n}{m_n} \rightarrow\beta$ such that $0<\beta\leq1$ and $\sigma=1/\sqrt{m_n}$. Then, the empirical distribution of singular values of $\mathbf{W}$ converges to a non-random distribution with probability density function
    \begin{equation}
    f(w) = \frac{\sqrt{(w^2-\beta^2_-)(\beta^2_+-w^2)}}{\pi\beta w} \mathbf{1}_{[\beta_-,\beta_+]}(w),
    \end{equation}
    almost surely (a.s.), as $n\rightarrow\infty$ where $\beta_-=1-\sqrt{\beta}$, $\beta_+=1+\sqrt{\beta}$, and $\mathbf{1}_{[\cdot,\cdot]}$ is an indicator function. Moreover, $w_1\xrightarrow{a.s.}\beta_+$ and $w_{n}\xrightarrow{a.s.}\beta_-$ as $n\rightarrow\infty$.
    \end{lemma}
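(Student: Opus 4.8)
The plan is to translate the claim about the singular values of $\mathbf{W}$ into a statement about the eigenvalues of its Gram matrix and then invoke the classical Marchenko--Pastur machinery. First I would note that the singular values $\{w_i\}$ of $\mathbf{W}$ are exactly the nonnegative square roots of the eigenvalues $\{\lambda_i\}$ of $\mathbf{S}=\mathbf{W}\mathbf{W}^{\smaller{T}}$. Since each entry of $\mathbf{W}$ has variance $\sigma^2=1/m_n$, the $n\times n$ matrix $\mathbf{S}$ is precisely the normalized sample-covariance (Wishart) matrix $\tfrac{1}{m_n}\mathbf{G}\mathbf{G}^{\smaller{T}}$, where $\mathbf{G}$ has i.i.d.\ standard Gaussian entries. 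Consequently the empirical singular-value distribution of $\mathbf{W}$ is the pushforward under the map $\lambda\mapsto\sqrt{\lambda}$ of the empirical spectral distribution (ESD) of $\mathbf{S}$, and since $\beta\leq1$ the matrix $\mathbf{S}$ is a.s.\ full rank so no atom at the origin appears.

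The bulk convergence then reduces to the Marchenko--Pastur theorem, which I would establish via the Stieltjes transform. Writing $s_n(z)=\tfrac{1}{n}\operatorname{tr}(\mathbf{S}-z\mathbf{I}_n)^{-1}$ for $z\in\mathbb{C}^+$, one shows in the regime $n/m_n\to\beta$ that the limit satisfies the self-consistent fixed-point equation
\begin{equation}
\beta z\,s(z)^2 + (z+\beta-1)\,s(z) + 1 = 0.
\label{eqn:mp-fixed-point}
\end{equation}
Because the entries are Gaussian, the a.s.\ concentration of $s_n(z)$ around its expectation follows immediately from Gaussian concentration of Lipschitz spectral functionals together with Borel--Cantelli, giving $s_n(z)\xrightarrow{a.s.}s(z)$ for each fixed $z$. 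Solving the quadratic (\ref{eqn:mp-fixed-point}) for the branch with $s(z)\in\mathbb{C}^+$ and applying the Stieltjes inversion formula recovers the Marchenko--Pastur density $f_{MP}(\lambda)=\frac{\sqrt{(\lambda_+-\lambda)(\lambda-\lambda_-)}}{2\pi\beta\lambda}$ supported on $[\lambda_-,\lambda_+]$ with $\lambda_\pm=(1\pm\sqrt{\beta})^2$.

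Next I would perform the change of variables $w=\sqrt{\lambda}$. Setting $\lambda_\pm=\beta_\pm^2$ with $\beta_\pm=1\pm\sqrt{\beta}$, the support $[\lambda_-,\lambda_+]$ maps to $[\beta_-,\beta_+]$, and the Jacobian $d\lambda=2w\,dw$ converts $f_{MP}$ into
\begin{equation}
f(w)=2w\,f_{MP}(w^2)=\frac{\sqrt{(\beta_+^2-w^2)(w^2-\beta_-^2)}}{\pi\beta w}\,\mathbf{1}_{[\beta_-,\beta_+]}(w),
\end{equation}
which is exactly the claimed density; the special case $\beta=1$ collapses to the familiar quarter-circle law $f(w)=\tfrac{1}{\pi}\sqrt{4-w^2}$ on $[0,2]$, justifying the name. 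This completes the a.s.\ weak convergence of the empirical singular-value distribution.

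The remaining, and genuinely harder, claim is the a.s.\ convergence of the \emph{extreme} singular values, $w_1\xrightarrow{a.s.}\beta_+$ and $w_n\xrightarrow{a.s.}\beta_-$, because convergence of the ESD by itself does not rule out a vanishing fraction of eigenvalues escaping the bulk. Here I would appeal to the Bai--Yin theorem on the extreme eigenvalues of Wishart matrices: under a finite fourth-moment condition---automatic for Gaussian entries---one has $\lambda_1\xrightarrow{a.s.}\lambda_+$ and the smallest eigenvalue $\lambda_n\xrightarrow{a.s.}\lambda_-$, whence taking square roots and using continuity of $\sqrt{\cdot}$ yields the stated edge limits. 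I expect this edge analysis to be the main obstacle: ruling out spurious eigenvalues beyond $\lambda_+$ a.s.\ requires high-order trace/moment bounds (or a Gaussian comparison argument) well beyond the Stieltjes-transform estimate used for the bulk, and the smallest-singular-value limit is delicate in the square case $\beta=1$, where $\beta_-=0$ and the soft left edge degenerates into a hard edge, so that regime must be treated separately rather than as a routine corollary of the bulk result.
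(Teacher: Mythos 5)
This lemma is stated in the paper purely as background, cited to the random-matrix literature (\cite{Meh2004,anderson2010introduction}); the paper contains no proof of it, so there is no in-paper argument to compare against. Your sketch is the standard route taken by those references and it is correct: reduce singular values of $\mathbf{W}$ to eigenvalues of the Wishart matrix $\mathbf{W}\mathbf{W}^{T}$, derive the Marchenko--Pastur law via the Stieltjes-transform fixed-point equation and inversion formula, push the limiting law forward under $\lambda\mapsto\sqrt{\lambda}$ to obtain exactly the stated density (collapsing to the quarter-circle law at $\beta=1$), and invoke the Bai--Yin theorem for the almost-sure convergence of the extreme singular values. Your two cautionary remarks are also exactly right and worth emphasizing: bulk ESD convergence alone does not control the edges, so the Bai--Yin input is genuinely needed for $w_1\xrightarrow{a.s.}\beta_+$ and $w_n\xrightarrow{a.s.}\beta_-$; and at $\beta=1$ the left edge becomes a hard edge ($\beta_-=0$), where the smallest-singular-value claim requires a separate argument rather than following from the soft-edge result.
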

    The above lemma states that no singular value of $\mathbf{W}$ lies beyond the support [$\beta_-,\beta_+$]. The singular values in ($\beta_-,\beta_+$) are often referred to as \emph{bulk} with the two extremes, $\beta_-$ and $\beta_+$ being called as the \emph{bulk-edges}~\cite{Shabalin2013}. Clearly, any singular value beyond the bulk-edges indicates the presence of correlation in the data and so that of the signal\footnote{Here we have ignored an infinitesimal margin $\epsilon$ near the bulk edges.}.
    \begin{definition}[Bulk Shrinker, $\eta_{\beta_+}$] We call a shrinkage function $\eta(y)$ a \emph{bulk-shrinker} if $\eta(y) = 0$ for $y\leq \beta_+$.
    \end{definition}
    Clearly, if $\hat{\mathbf{x}}_{\beta_+}=[\eta_{\beta_+}(y_1),\eta_{\beta_+}(y_2), ..., \eta_{\beta_+}(y_n)]^{\smaller T}$ then one can take $r^*=\|\hat{\mathbf{x}}_{\beta_+}\|_0$ as an estimate of rank of $\mathbf{X}$, where $\|\cdot\|_0$ denotes the $l_0$-norm. By minimizing the AMSE as
    \begin{equation}
    \eta^*_{\beta_+} = \underset{\eta_{\beta_+}}{\operatorname{\min}}\underbrace{\underset{n \rightarrow\infty} \lim\Big\|\sum_{i=1}^{r}x_i\mathbf{u}_i\mathbf{v}^{\smaller{T}}_i-
    \sum_{i=1}^{r^*}\eta_{\beta_+}(y_i)\tilde{\mathbf{u}}_i\tilde{\mathbf{v}}^{\smaller{T}}_i\Big\|_F^2}_{\text{AMSE}_{\eta_{\beta_+}}(\mathbf{Y})}
    \label{eqn:amse}
    \end{equation}
    the authors in \cite{GavishDonoho2014} showed that a continuous bulk-shrinker admits the optimal solution
    \begin{equation}
    \eta^*_{\beta_+}(y_i) = x_i \langle u_i,\tilde{u}_i\rangle\langle v_i,\tilde{v}_i\rangle,~~~\forall i
    \label{eqn:optimal-shrink-1}
    \end{equation}
    where $\langle\cdot,\cdot\rangle$ denotes the inner product. We see that the optimal bulk-shrinker depends not only on the true singular values of signal matrix but also on the cosine of the angular deviation between corresponding singular vectors caused by noise matrix. Hence, it is important to study the effects of the noise matrix on the singular values as well as on the singular vectors. Lemma~\ref{sv-asyloc} presents the asymptotic locations of the singular values of $\mathbf{Y}$ relative to the singular values of $\mathbf{X}$ thereby giving a relation between $y_i$ and $x_i$. Lemma~\ref{sv-phase-tran} presents a relative phase transition in singular vectors of noise at bulk edges due to the presence of signal components giving the cosine of the angles required in (\ref{eqn:optimal-shrink-1}).
    \begin{lemma}[Asymptotic Location~\cite{benaych2012singular,Shabalin2013}]
    \label{sv-asyloc} Consider the data model in (\ref{eqn:data-model}) with Gaussian random matrix $\mathbf{W}$ of Lemma~\ref{sv-dist} and their SVD as defined in Table~\ref{tab:svd-table}. Assume that $y_1\geq y_2\geq...\geq y_{m_n}$ and $x_1\geq x_2\geq...\geq x_{r}\geq0$. For $1\leq i \leq r$, in large matrix limits, the asymptotic locations of singular values are related as
    \begin{equation}
    \underset{n \rightarrow\infty} \lim y_i \stackrel{a.s.}{=} \left\{
    		\begin{array}{ll}
    		 \rho(x_i), 	&  x_i>\beta^{1/4}\\
          	 \beta_+, 	&  0<x_i\leq\beta^{1/4}
          	\end{array}
    \right.
    \end{equation}
    where $r$ is the rank of $\mathbf{X}$ and
    \begin{equation}
    \rho(x) = \sqrt{\Big(x+\frac{1}{x}\Big)\Big(x+\frac{\beta}{x}\Big)}.
    \label{eqn:rho-y-relation}
    \end{equation}
    \end{lemma}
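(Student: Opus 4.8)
The plan is to treat $\mathbf{X}=\sum_{i=1}^{r}x_i\mathbf{u}_i\mathbf{v}_i^{\top}$ as a finite-rank ($r$ fixed) perturbation of the large Gaussian matrix $\mathbf{W}$ and to follow the resolvent approach of Benaych-Georges and Nadakuditi~\cite{benaych2012singular}. First I would characterize the outlier singular values analytically: a scalar $y>\beta_+$, lying strictly outside the bulk support $[\beta_-,\beta_+]$ of Lemma~\ref{sv-dist}, is a singular value of $\mathbf{Y}=\mathbf{X}+\mathbf{W}$ precisely when $y^2\mathbf{I}-(\mathbf{X}+\mathbf{W})(\mathbf{X}+\mathbf{W})^{\top}$ is singular. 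Because $\mathbf{X}$ has rank $r$, a rank-$r$ determinant (Schur-complement) identity collapses this condition to the vanishing of the determinant of a small $2r\times 2r$ matrix whose entries are the bilinear forms $\mathbf{u}_i^{\top}R(y)\mathbf{u}_j$, $\mathbf{v}_i^{\top}R(y)\mathbf{v}_j$ and mixed terms, where $R(y)=(y^2\mathbf{I}-\mathbf{W}\mathbf{W}^{\top})^{-1}$ (together with its companion resolvent) is the noise resolvent evaluated off the spectrum.

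The technical core, and the step I expect to be the main obstacle, is to show that these random bilinear forms concentrate on deterministic limits. Since $\mathbf{W}$ is Gaussian its singular vectors are Haar-distributed, hence isotropic and delocalized, so for the deterministic, fixed signal directions $\mathbf{u}_i,\mathbf{v}_j$ one has, almost surely as $n\to\infty$, $\mathbf{u}_i^{\top}R(y)\mathbf{u}_j\to\delta_{ij}\,\phi(y)$ and likewise for the $\mathbf{v}$-forms, while the cross terms vanish; here $\phi(z)=\int\frac{z}{z^2-t^2}\,d\mu(t)$ and $\mu$ is the limiting singular value law with density $f$ from Lemma~\ref{sv-dist}. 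This decoupling makes the $2r\times2r$ determinant block-diagonalize across the $r$ spikes, so each $x_i$ contributes an independent scalar equation. Writing $\psi(z)=\beta\phi(z)+(1-\beta)/z$ for the companion transform associated with the aspect ratio $\beta$ and $D(z)=\phi(z)\psi(z)$ for the resulting $D$-transform, the limiting outlier condition for the $i$-th spike reduces to the master equation
\begin{equation}
x_i^2\,D(y)=1.
\label{eqn:master}
\end{equation}

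Next I would solve (\ref{eqn:master}) for the quarter-circle law. Evaluating the Cauchy-type integrals defining $\phi$ and $\psi$ against the density $f$ of Lemma~\ref{sv-dist} (equivalently, using the closed-form Stieltjes transform of the Marchenko–Pastur law satisfied by the squared singular values $w^2$) gives $D$ in closed form; inverting (\ref{eqn:master}) then yields $y=\rho(x_i)$ with $\rho(x)=\sqrt{(x+1/x)(x+\beta/x)}$ as claimed. A quick consistency check supports this: as $z\to\infty$ both $\phi(z)$ and $\psi(z)$ behave like $1/z$, so $D(z)\approx 1/z^2$ and (\ref{eqn:master}) gives $y\approx x$, matching $\rho(x)\to x$ for large $x$.

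Finally, the phase transition and the case split fall out of the analyticity of $\rho$. Writing $\rho(x)^2=x^2+(1+\beta)+\beta/x^2$ and minimizing over $x>0$, the derivative with respect to $x^2$ is $1-\beta/x^4$, which vanishes at $x=\beta^{1/4}$, where $\rho(\beta^{1/4})^2=\sqrt{\beta}+(1+\beta)+\sqrt{\beta}=(1+\sqrt{\beta})^2=\beta_+^2$; thus $\rho$ attains its minimum value $\beta_+$ exactly at the critical spike $x=\beta^{1/4}$ and is strictly increasing for $x>\beta^{1/4}$. Hence a supercritical spike $x_i>\beta^{1/4}$ produces a genuine outlier at $\rho(x_i)>\beta_+$, which is admissible since (\ref{eqn:master}) can only be solved for $y$ outside the bulk. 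A subcritical spike $0<x_i\le\beta^{1/4}$ admits no such solution, the would-be location $\rho(x_i)<\beta_+$ lying inside the bulk, so the corresponding singular value of $\mathbf{Y}$ is absorbed into the bulk and sticks at the upper edge, $y_i\xrightarrow{a.s.}\beta_+$, by Lemma~\ref{sv-dist}. This yields exactly the two-branch formula, with the almost-sure convergence inherited from the concentration established in the second step.
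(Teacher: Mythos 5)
The paper offers no proof of this lemma at all — it is imported as a known result from the cited references \cite{benaych2012singular,Shabalin2013} — so the only benchmark is the argument in those sources, namely the Benaych-Georges--Nadakuditi resolvent/D-transform method, which is precisely what you reconstruct: the $2r\times 2r$ determinant (Schur-complement) reduction, concentration of the resolvent bilinear forms to $\delta_{ij}\phi(y)$, the master equation $x_i^2 D(y)=1$, and the phase transition at $x=\beta^{1/4}$ with edge-sticking for subcritical spikes. Your outline is correct and your consistency checks (the minimum of $\rho$ over $x>0$ being attained at $\beta^{1/4}$ with value $\beta_+$, and $\rho(x)\to x$ as $x\to\infty$) verify, so this is essentially the same approach as the proof the paper defers to.
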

    \begin{lemma}[Phase-Transition~\cite{Paul2007Asymptotics,Shabalin2013}]
    \label{sv-phase-tran} Suppose $\mathbf{X}$ has distinct singular values $x_1> x_2>...> x_{r}>0$. Then, for $1\leq i, j \leq r$ and $x_i>\beta^{1/4}$
    \begin{equation}
    \underset{n \rightarrow\infty} \lim |\langle u_i,\tilde{u}_j\rangle| \stackrel{a.s.}{=} \left\{
    		\begin{array}{ll}
    		 \theta_u(x_i), 	& i=j\\
          	 0,		 		& i\neq j \\
          	\end{array}
    \right.
    \end{equation}
    and
    \begin{equation}
    \underset{n \rightarrow\infty} \lim |\langle v_i,\tilde{v}_j\rangle| \stackrel{a.s.}{=} \left\{
    		\begin{array}{ll}
    		 \theta_v(x_i), 	& i=j\\
          	 0,		 		& i\neq j \\
          	\end{array}
    \right.
    \end{equation}
    where
    \begin{equation}
    \theta_u(x) = \sqrt{\frac{x^4-\beta}{x^2(x^2+\beta)}}~\text{and}~\theta_v(x) = \sqrt{\frac{x^4-\beta}{x^2(x^2+1)}}.
    \end{equation}
    However, if $x_i\leq\beta^{1/4}$ or for $i,j\geq r$ we have $\underset{n \rightarrow\infty} \lim |\langle u_i,\tilde{u}_j\rangle| \stackrel{a.s.}{=} \underset{n \rightarrow\infty} \lim |\langle v_i,\tilde{v}_j\rangle|\stackrel{a.s.}{=}0$.
    \end{lemma}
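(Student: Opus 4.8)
The plan is to treat $\mathbf{Y}=\mathbf{W}+\mathbf{X}$ as a rank-$r$ additive perturbation of the noise matrix and to convert the $n$-dimensional eigenvector problem for $\mathbf{Y}$ into a fixed $2r$-dimensional linear system governed by the resolvent of $\mathbf{W}$. Writing $\mathbf{X}=\mathbf{U}\mathbf{\Sigma}_x\mathbf{V}^{\top}$ with $\mathbf{U}=[\mathbf{u}_1,\dots,\mathbf{u}_r]$ and $\mathbf{V}=[\mathbf{v}_1,\dots,\mathbf{v}_r]$, a singular pair $(\tilde{\mathbf{u}},\tilde{\mathbf{v}})$ of $\mathbf{Y}$ at an outlier value $y$ satisfies $\mathbf{Y}\tilde{\mathbf{v}}=y\tilde{\mathbf{u}}$ and $\mathbf{Y}^{\top}\tilde{\mathbf{u}}=y\tilde{\mathbf{v}}$. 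Introducing the overlap vectors $\gamma=\mathbf{U}^{\top}\tilde{\mathbf{u}}$ and $\alpha=\mathbf{V}^{\top}\tilde{\mathbf{v}}$, whose $i$-th entries are exactly the inner products $\langle u_i,\tilde u\rangle$ and $\langle v_i,\tilde v\rangle$ we must estimate, I would eliminate $\tilde{\mathbf{v}}$ to obtain $(y^2\mathbf{I}-\mathbf{W}\mathbf{W}^{\top})\tilde{\mathbf{u}}=\mathbf{W}\mathbf{V}\mathbf{\Sigma}_x\gamma+y\,\mathbf{U}\mathbf{\Sigma}_x\alpha$. Projecting back onto $\mathbf{U}$, and symmetrically onto $\mathbf{V}$, yields a closed system whose coefficients are the projected resolvent blocks $\mathbf{U}^{\top}R(y)\mathbf{U}$, $\mathbf{U}^{\top}R(y)\mathbf{W}\mathbf{V}$, and their companions, where $R(y)=(y^2\mathbf{I}-\mathbf{W}\mathbf{W}^{\top})^{-1}$.

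The second step is to pass to the limit in these blocks. Because $\mathbf{W}$ is Gaussian, its singular vectors $\bar{\mathbf u}_k,\bar{\mathbf v}_k$ are Haar-distributed and independent of $\{w_k\}$ and of the deterministic signal directions, so every bilinear form concentrates around its mean: the quadratic blocks satisfy $\mathbf{U}^{\top}R(y)\mathbf{U}\xrightarrow{a.s.}g(y)\,\mathbf{I}_r$ with $g(y)=\tfrac1n\operatorname{tr}R(y)$ converging to a Stieltjes-type transform of the quarter-circle law of Lemma~\ref{sv-dist}, while blocks carrying an odd power of $\mathbf{W}$, such as $\mathbf{U}^{\top}R(y)\mathbf{W}\mathbf{V}$, have mean zero and variance $O(1/n)$ and hence vanish. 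The diagonal limit $g(y)\mathbf{I}_r$ reflects $\mathbf{u}_i^{\top}\mathbf{u}_j=\delta_{ij}$, so the system decouples spike-by-spike, giving $\gamma_i\xrightarrow{a.s.}y\,g(y)\,x_i\,\alpha_i$ and $\alpha_i\xrightarrow{a.s.}y\,\tilde g(y)\,x_i\,\gamma_i$, with $\tilde g$ the companion transform built from $\mathbf{W}^{\top}\mathbf{W}$ (normalized by $m_n$, which is the source of the $\beta$-versus-$1$ asymmetry between $\theta_u$ and $\theta_v$).

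Combining these gives $\gamma_i\big(1-y^{2}g(y)\tilde g(y)x_i^{2}\big)=0$. For the outlier $\tilde{\mathbf u}_j$ attached to spike $x_j$, the value $y=\rho(x_j)$ solves $y^{2}g\tilde g\,x_j^{2}=1$, reproducing Lemma~\ref{sv-asyloc}; since the $x_i$ are \emph{distinct}, the bracket is nonzero for every $i\neq j$, forcing $\gamma_i\to0$ and yielding the off-diagonal claim, and the same delocalization argument kills the overlap of any fixed signal direction with a non-outlier (bulk) vector, covering the case $i,j\ge r$. For $i=j$ the overlap magnitude is fixed by the normalization $\|\tilde{\mathbf u}\|=\|\tilde{\mathbf v}\|=1$: taking squared norms of the expression for $\tilde{\mathbf u}$ introduces the second-order blocks $\mathbf{U}^{\top}R(y)^2\mathbf{U}\to h(y)\mathbf{I}_r$ and $\mathbf{V}^{\top}\mathbf{W}^{\top}R(y)^2\mathbf{W}\mathbf{V}$, which are derivatives of $g,\tilde g$; solving the resulting scalar identity for $\gamma_i^2$ and substituting $y=\rho(x_i)$ gives, after a finite computation specialized to the quarter-circle transforms, exactly $|\langle u_i,\tilde u_i\rangle|^2\to\theta_u^2(x_i)$ and $|\langle v_i,\tilde v_i\rangle|^2\to\theta_v^2(x_i)$. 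For $x_i\le\beta^{1/4}$ the equation $y^{2}g\tilde g\,x_i^{2}=1$ has no root separated from $\beta_+$ (equivalently $1/D(\beta_+)=\sqrt\beta$ pins the critical spike at $\beta^{1/4}$), so $\tilde{\mathbf u}_i$ stays a delocalized bulk vector asymptotically orthogonal to $\mathbf{u}_i$ and the overlap vanishes.

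The main obstacle is the concentration step: one must show that the bilinear resolvent forms converge to their deterministic equivalents \emph{uniformly} for $y$ in a neighbourhood of $\rho(x_i)$, together with the absence of any other eigenvalue of $\mathbf{W}\mathbf{W}^{\top}$ near $\rho(x_i)^2$, so that $R(y)$ is well conditioned there. In the Gaussian case this follows from Haar-invariance and concentration of measure on the sphere, but the delicate part is the uniform control right at the bulk edge $\beta_+$ needed for the subcritical regime $x_i\le\beta^{1/4}$, exactly where the infinitesimal margin $\epsilon$ of the earlier footnote must be handled with care. Alternatively, all of the above is subsumed by the general low-rank perturbation theorem of Benaych-Georges and Nadakuditi, from which the stated formulas follow by specializing the $D$-transform to the quarter-circle law of Lemma~\ref{sv-dist}.
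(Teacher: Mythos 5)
The paper never proves this lemma: it is stated as a known result imported verbatim from the cited references (Paul 2007; Shabalin--Nosko 2013, with the companion result of Benaych-Georges--Nadakuditi used for Lemma~\ref{sv-asyloc}), so there is no internal argument to compare yours against. Your sketch correctly reconstructs the standard proof from exactly that literature---reduce the outlier singular-vector problem to a $2r$-dimensional system in the overlaps $\mathbf{U}^{\top}\tilde{\mathbf{u}}$, $\mathbf{V}^{\top}\tilde{\mathbf{v}}$ via the resolvent $R(y)=(y^2\mathbf{I}-\mathbf{W}\mathbf{W}^{\top})^{-1}$, use orthogonal invariance of the Gaussian ensemble to replace the projected blocks by Stieltjes-type transforms, obtain the outlier equation $y^{2}g(y)\tilde g(y)x_i^{2}=1$, kill the off-diagonal overlaps via distinctness of the $x_i$, and fix the diagonal overlaps from the unit-norm constraint, which yields $\theta_u$ and $\theta_v$ upon specializing to the quarter-circle transforms. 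The technical caveats you flag (uniform concentration of the bilinear resolvent forms near $\rho(x_i)$ and edge control in the subcritical regime $x_i\leq\beta^{1/4}$) are precisely the points handled carefully in those references, so your proposal is a faithful outline of the proof the paper implicitly relies on rather than a genuinely different route.
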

    \begin{corollary}[Equivalence] \label{equivalence}For the data model (\ref{eqn:data-model}), as a direct consequence of Lemma~\ref{sv-asyloc}, the conditions $x_i>\beta^{1/4}$ and $y_i>\beta_+$ are equivalent.
    \end{corollary}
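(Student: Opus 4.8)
The plan is to treat the claimed equivalence as a pure monotonicity statement about the map $\rho$ defined in (\ref{eqn:rho-y-relation}), since Lemma~\ref{sv-asyloc} already pins the asymptotic value of $y_i$ to either $\rho(x_i)$ or the bulk edge $\beta_+$ according to whether $x_i$ exceeds the threshold $\beta^{1/4}$. Concretely, I would prove the two implications separately. For the forward direction, assume $x_i>\beta^{1/4}$; then Lemma~\ref{sv-asyloc} gives $\lim_n y_i=\rho(x_i)$, and it suffices to show $\rho(x_i)>\beta_+$. For the converse I would argue by contraposition: if $x_i\leq\beta^{1/4}$, the same lemma forces $\lim_n y_i=\beta_+$, so $y_i>\beta_+$ fails; hence $y_i>\beta_+$ implies $x_i>\beta^{1/4}$.

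The analytic core is therefore to understand $\rho$ near and above $x=\beta^{1/4}$. First I would simplify the square by expanding
\begin{equation}
\rho(x)^2=\Big(x+\tfrac{1}{x}\Big)\Big(x+\tfrac{\beta}{x}\Big)=x^2+\frac{\beta}{x^2}+(1+\beta),
\end{equation}
which isolates the only $x$-dependent terms as $x^2+\beta/x^2$. Differentiating gives
\begin{equation}
\frac{d}{dx}\,\rho(x)^2=2x-\frac{2\beta}{x^3}=\frac{2(x^4-\beta)}{x^3},
\end{equation}
so that $\rho$ is strictly decreasing on $(0,\beta^{1/4})$, attains its minimum at $x=\beta^{1/4}$, and is strictly increasing on $(\beta^{1/4},\infty)$. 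Evaluating at the threshold, the factorisations $x+1/x=\beta^{-1/4}(1+\sqrt{\beta})$ and $x+\beta/x=\beta^{1/4}(1+\sqrt{\beta})$ at $x=\beta^{1/4}$ collapse the product to $(1+\sqrt{\beta})^2$, whence $\rho(\beta^{1/4})=1+\sqrt{\beta}=\beta_+$.

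With these two facts in hand the corollary follows immediately: for $x_i>\beta^{1/4}$ the strict monotonicity of $\rho$ past its minimum yields $\rho(x_i)>\rho(\beta^{1/4})=\beta_+$, closing the forward implication, while the contrapositive handled above closes the converse. I expect the only genuine obstacle to be the monotonicity claim, i.e. verifying the sign of the derivative; once $\rho(x)^2$ is written in the reduced form $x^2+\beta/x^2+(1+\beta)$ this reduces to the elementary inequality $x^4>\beta$, which is exactly the defining condition $x>\beta^{1/4}$, so no subtlety remains beyond the boundary evaluation $\rho(\beta^{1/4})=\beta_+$.
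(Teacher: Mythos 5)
Your proof is correct and follows the same route the paper intends: the corollary is stated there without an explicit proof, as a ``direct consequence'' of Lemma~\ref{sv-asyloc}, and your argument simply makes that implication rigorous by supplying the two facts it silently rests on, namely that $\rho$ in (\ref{eqn:rho-y-relation}) is strictly increasing on $(\beta^{1/4},\infty)$ and that $\rho(\beta^{1/4})=\beta_+$. Your computations all check out ($\rho(x)^2=x^2+\beta/x^2+1+\beta$, derivative $2(x^4-\beta)/x^3$, boundary value $(1+\sqrt{\beta})^2$), and handling the converse by contraposition via the case $x_i\leq\beta^{1/4}$ of Lemma~\ref{sv-asyloc} is exactly what the equivalence requires.
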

    Evidently, one can interchangeably use these conditions however, in the rest of the paper, we prefer the latter as $y_i$ is a known quantity. Using Lemma~\ref{sv-asyloc} and Lemma~\ref{sv-phase-tran}, (\ref{eqn:optimal-shrink-1}) can now be written as \begin{equation}
    \eta^*_{\beta_+}(y) = \left\{
    		\begin{array}{ll}
    		 \frac{\sqrt{(y_i^2-\beta-1)^2-4\beta}}{y_i}, & y_i>\beta_+~\text{and}~i\geq r^*\\
          	 0,		 		& \text{otherwise}.\\
          	\end{array}
    \right.
    \label{eqn:optimal-shrink-2}
    \end{equation}
    With this optimal bulk-shrinker, (\ref{eqn:gen-shrink}) results in an asymptotically optimal estimator denoted as $\mathbf{\widehat{X}}^*_{\beta_+}$ which achieves the lower bound on AMSE. From Fig.\ref{fig:shrinker-plots}(a), we see that it truncates the singular values smaller than $\beta_+$ and shrinks the singular values greater than $\beta_+$. For the optimal value of hard-threshold, $\mathbf{\widehat{X}}^{\text{SVHT}}_{\mu}$ and $\mathbf{\widehat{X}}^*_{\beta_+}$ are quite close to each other, hence a similar bias-variance trade-off is made in the case of asymptotic setting. However, in finite matrix limit or in relatively higher rank cases, this optimality is lost as the truncation of singular values may introduce a small bias.
    \subsection{SURE Principle}
    Consider a unitary invariant spectral function $F:\mathbb{R}^{n\times m}\mapsto\mathbb{R}^{n\times m}$ such that $F_{\eta}(\mathbf{Y})=\mathbf{\widetilde{U}}F_{\eta}(\mathbf{\Sigma})\mathbf{\widetilde{V}}^{\smaller{T}}$ where $F_{\eta}(\mathbf{\Sigma})=\text{diag}(\eta(y_1),\eta(y_2), ..., \eta(y_m))$. In \cite{Candes2013}, the authors showed that for a weakly differentiable spectral function, an unbiased estimator of the MSE for the data model in (\ref{eqn:data-model}) with $W_{i,j}\stackrel{iid}{\sim}\mathcal{N}(0,\sigma^2)$, can be given as
    \begin{equation}
    \text{SURE}_{\eta}{(\mathbf{Y})} = -nm\sigma^2 +\|\mathbf{Y}-F_{\eta}(\mathbf{Y})\|_{F}^{2} +2\sigma^2 \mbox{div} (F_{\eta}(\mathbf{Y})),
    \label{eqn:gen_sure}
    \end{equation}
    where div($F_{\eta}(\mathbf{Y})$) is the divergence of spectral estimator as given in the following theorem.
    \begin{theorem}[Divergence of Spectral Function~\cite{Candes2013}] \label{theorem-div} Suppose $\mathcal{F}$ is a set of simple matrices with full rank such that Lebesgue measure of its complementary set is zero. Consider a unitary-invariant spectral function $F:\mathcal{F} \mapsto \mathbb{R}^{n\times m}$ of the form
    \begin{equation}
    F_{\eta}(\mathbf{Y}) =\mathbf{\widetilde{U}}F_{\eta}(\mathbf{\Sigma})\mathbf{\widetilde{V}}^{\smaller{T}}:=\sum_{i=1}^{L}\eta(y_i)\tilde{\mathbf{u}}_i\tilde{\mathbf{v}}_i.
    \label{eqn:spectral-fun}
    \end{equation}
    If $F_{\eta}$ is differentiable, the divergence of $F_{\eta}(\mathbf{Y})$ can be given as
    \begin{align}
    \emph{div}(F_{\eta}(\mathbf{Y}))=\sum_{i=1}^{L}\eta'(y_i)+|n-m|\sum_{i=1}^{L}\frac{\eta(y_i)}{y_i}+2\mathop{\sum_{i=1}\sum_{j=1,}}_{i\neq j}^{L}\frac{y_i~\eta(y_i)}{y_i^2-y_j^2},
    \label{eqn:div}
    \end{align}
    \end{theorem}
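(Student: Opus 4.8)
The plan is to evaluate the divergence straight from its definition, $\mathrm{div}(F_\eta(\mathbf{Y}))=\sum_{a=1}^{n}\sum_{b=1}^{m}\partial [F_\eta(\mathbf{Y})]_{ab}/\partial Y_{ab}$, by differentiating the entrywise spectral representation $[F_\eta(\mathbf{Y})]_{ab}=\sum_{k=1}^{L}\eta(y_k)\,[\tilde{u}_k]_a[\tilde{v}_k]_b$. Because each $y_k$ and each pair $\tilde{u}_k,\tilde{v}_k$ depends on $\mathbf{Y}$, the product rule splits the derivative into three groups: the term carrying $\eta'(y_k)\,\partial y_k/\partial Y_{ab}$, and the two terms carrying $\eta(y_k)$ times the derivative of the left and of the right singular vector. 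I would prove the claim by showing that these three groups reproduce, in order, the three summands of (\ref{eqn:div}). Throughout I work on the generic set $\mathcal{F}$, where full rank keeps every $y_k\neq0$ and distinctness of the singular values keeps every difference $y_k^{2}-y_l^{2}$ nonzero, so that the first-order perturbation expansion used below is legitimate.

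The first group is immediate. Using the standard identity $\partial y_k/\partial Y_{ab}=[\tilde{u}_k]_a[\tilde{v}_k]_b$, its contribution is $\sum_k\eta'(y_k)\sum_{a,b}[\tilde{u}_k]_a^{2}[\tilde{v}_k]_b^{2}$, and the inner double sum factorises as $\big(\sum_a[\tilde{u}_k]_a^{2}\big)\big(\sum_b[\tilde{v}_k]_b^{2}\big)=1$ by orthonormality, leaving exactly $\sum_k\eta'(y_k)$.

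The heart of the argument is the first-order perturbation of the singular vectors along the coordinate direction $E_{ab}$ (the matrix whose only nonzero entry is a $1$ in position $(a,b)$). I would differentiate the defining relations $\mathbf{Y}\tilde{v}_k=y_k\tilde{u}_k$ and $\mathbf{Y}^{T}\tilde{u}_k=y_k\tilde{v}_k$, project onto $\tilde{u}_l$ and $\tilde{v}_l$ for $l\neq k$, and use orthonormality to get a $2\times2$ linear system for the expansion coefficients $P_{lk}=\langle\tilde{u}_l,\partial\tilde{u}_k/\partial Y_{ab}\rangle$ and $Q_{lk}=\langle\tilde{v}_l,\partial\tilde{v}_k/\partial Y_{ab}\rangle$. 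With $A_{lk}:=[\tilde{u}_l]_a[\tilde{v}_k]_b$ its solution is $P_{lk}=(y_kA_{lk}+y_lA_{kl})/(y_k^{2}-y_l^{2})$ and $Q_{lk}=(y_lA_{lk}+y_kA_{kl})/(y_k^{2}-y_l^{2})$. Substituting these into the two vector-derivative groups and summing over $a,b$, the orthonormality relations kill every cross term and each group collapses to $\sum_{l\neq k}y_k/(y_k^{2}-y_l^{2})$; adding them, multiplying by $\eta(y_k)$, and summing over $k$ yields $2\sum_{k}\sum_{l\neq k}y_k\eta(y_k)/(y_k^{2}-y_l^{2})$, the third summand of (\ref{eqn:div}).

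Finally, the middle term $|n-m|\sum_k\eta(y_k)/y_k$ arises from the out-of-range (rectangular) part of the same perturbation. When $n\neq m$ there are $|n-m|$ additional left singular directions (if $n>m$) or right singular directions (if $m>n$) lying in the orthogonal complement of the range; projecting the differentiated relation onto such a direction $\tilde{u}_s$ gives $\langle\tilde{u}_s,\partial\tilde{u}_k/\partial Y_{ab}\rangle=\tfrac{1}{y_k}[\tilde{u}_s]_a[\tilde{v}_k]_b$, and its contraction summed over $a,b$ and over the $|n-m|$ extra directions contributes $|n-m|/y_k$. Multiplying by $\eta(y_k)$ and summing over $k$ gives the middle summand. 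The step I expect to be the main obstacle is precisely this out-of-range bookkeeping: one must cleanly separate the square $L\times L$ block of the singular basis from the complement forced by the aspect ratio, keep the roles of $\tilde{u}$ and $\tilde{v}$ straight depending on whether $n>m$ or $m>n$, and confirm that the differentiability hypothesis on $F_\eta$ legitimises the term-by-term differentiation at the boundary of $\mathcal{F}$.
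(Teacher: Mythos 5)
The paper itself contains no proof of this theorem: it is imported verbatim from the cited reference \cite{Candes2013}, so there is no internal argument to compare yours against. Judged on its own merits, your proposal is correct and is essentially the same computation that underlies the proof in that reference, where it is organized in matrix form via the antisymmetric differentials $\widetilde{\mathbf{U}}^{T}d\widetilde{\mathbf{U}}$ and $\widetilde{\mathbf{V}}^{T}d\widetilde{\mathbf{V}}$ rather than coefficient by coefficient. Your identity $\partial y_k/\partial Y_{ab}=[\tilde{u}_k]_a[\tilde{v}_k]_b$ gives the first summand; your $2\times 2$ system is solved correctly, $P_{lk}=(y_kA_{lk}+y_lA_{kl})/(y_k^2-y_l^2)$ and $Q_{lk}=(y_lA_{lk}+y_kA_{kl})/(y_k^2-y_l^2)$, and the orthonormality contractions do collapse each in-range group to $\sum_{l\neq k}y_k/(y_k^2-y_l^2)$, since the cross terms carry the factor $\langle\tilde{u}_k,\tilde{u}_l\rangle\langle\tilde{v}_k,\tilde{v}_l\rangle=0$; and the out-of-range projection $\langle\tilde{u}_s,\partial\tilde{u}_k/\partial Y_{ab}\rangle=\frac{1}{y_k}[\tilde{u}_s]_a[\tilde{v}_k]_b$ (using that $\mathbf{Y}\,\partial\tilde{v}_k$ stays in the range of $\mathbf{Y}$) indeed contributes $|n-m|/y_k$ per singular value, on whichever side has the excess dimensions. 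One step you should make explicit: the expansion of $\partial\tilde{u}_k$ over $\tilde{u}_l$, $l\neq k$, plus out-of-range directions silently omits a possible component along $\tilde{u}_k$ itself; that coefficient vanishes because differentiating the normalization $\|\tilde{u}_k\|^2=1$ forces $\langle\tilde{u}_k,\partial\tilde{u}_k\rangle=0$ (and likewise for $\tilde{v}_k$). With that one-line addition, and the observation that on the set of simple full-rank matrices the SVD factors are locally smooth so term-by-term differentiation is legitimate, your argument is a complete and correct proof of the stated divergence formula.
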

    where $\eta' = \partial\eta/\partial y_i$. The differentiability assumption in the above can be relaxed, without any loss, to the differentiability in a spectral neighborhood which depends on differentiability of all factors of SVD of $\mathbf{Y}$ \cite{magnus1985differentiating,Sun2002NonsmoothMV,Lewis2001TwiceDS}. Hence, we only require $\mathbf{Y}$ to be simple (no repeated singular values) and full-rank. For the data model (\ref{eqn:data-model}), due to the random Gaussian noise $\mathbf{Y}$ is full-rank with high probability for any signal matrix $\mathbf{X}$.

    Note that SURE does not depend on $\mathbf{X}$ and hence it can be used to directly find the estimate of singular values. However, minimizing SURE for $n$ number of singular values is NP hard. This is why a parametric form of shrinkage function, with much lower number of free parameters than $n$, is specified and then the optimal parameters are found by minimizing SURE. For instance, the SVST $\eta^{\text{svst}}_{\lambda}(y_i) = (y_i-\lambda)_{+}$ is one of the examples of such an estimator with only one parameter $\lambda$ giving rise to the estimator in (\ref{eqn:svst}), where the optimal value of $\lambda$ is determined as
    \begin{equation}
    \lambda_{opt}=\underset{\lambda}{\operatorname{\min}}~\text{SURE}_{\eta}(\mathbf{Y}).
    \label{eqn:lambda-tune}
    \end{equation}
    Other approaches include the adaptive trace norm (ATN)~\cite{JosseS16} estimator $\mathbf{\widehat{X}}^{\text{ATN}}_{(\tau,\gamma)}$ with the shrinkage function
    \begin{equation}
    \eta^{\text{atn}}(\tau,\gamma)= y_i \Big(1-\frac{\tau^\gamma}{{y_i}^\gamma}\Big)_{+},
    \label{eqn:atn}
    \end{equation}
    and the SVLT~\cite{SVLT2015} estimator $\mathbf{\widehat{X}}^{\text{SVLT}}_{(p_1,p_2,p_3)}$ with the shrinkage function
    \begin{equation}
    \eta^{\text{svlt}}(p_1,p_2,p_3)= \Big(\frac{y_i}{1+e^{p_1(i-p_2)}}-p_3\Big)_{+},
    \label{eqn:svlt}
    \end{equation}
    having two and three parameters, respectively. The shape of these shrinkage estimators can be seen in Figs.~\ref{fig:shrinker-plots}(b) and \ref{fig:shrinker-plots}(c). For $\gamma=1$, the ATN reduces to the SVST and for $\gamma=\infty$ it becomes the SVHT where $\tau$ works as their threshold parameter. Similarly, for smaller value of $p_1$, the SVLT works as the SVST whereas for a larger value, it works as the SVHT and the parameters $p_2$ and $p_3$ determine the threshold and the extent of shrinkage, respectively. Hence, both of these estimators can be seen as a generalization of the SVHT and the SVST. Evidently, as we increase the number of parameters, more flexibility and control over shrinkage can be achieved. Since, each parameter is determined by minimizing corresponding SURE one-by-one, as the number of free parameters increases SURE becomes too complex to fit into real-time applications. This is because SUREs corresponding to these estimators do not impart a closed form solution for their parameters leaving us to rely upon a heuristic search for their optimal values in a given range. In practical applications, the use of such estimator leads to high computational cost (refer to \cite{Candes2013} for latency in finding optimal $\lambda$ in magnetic resonant imaging (MRI) denoising application). Hence, it is desirable to develop an estimator circumventing this limitation of SURE based parameter estimation. With this motivation, we propose a novel shrinkage estimator for which SURE imparts a closed form solution for its parameters and performs better than the above estimators in finite limits without loosing its asymptotic optimality in large matrix limits.

    \section{Proposed Method}
    \label{sec:ProbMethod}
    \begin{figure}[t]
    \includegraphics[width=\linewidth]{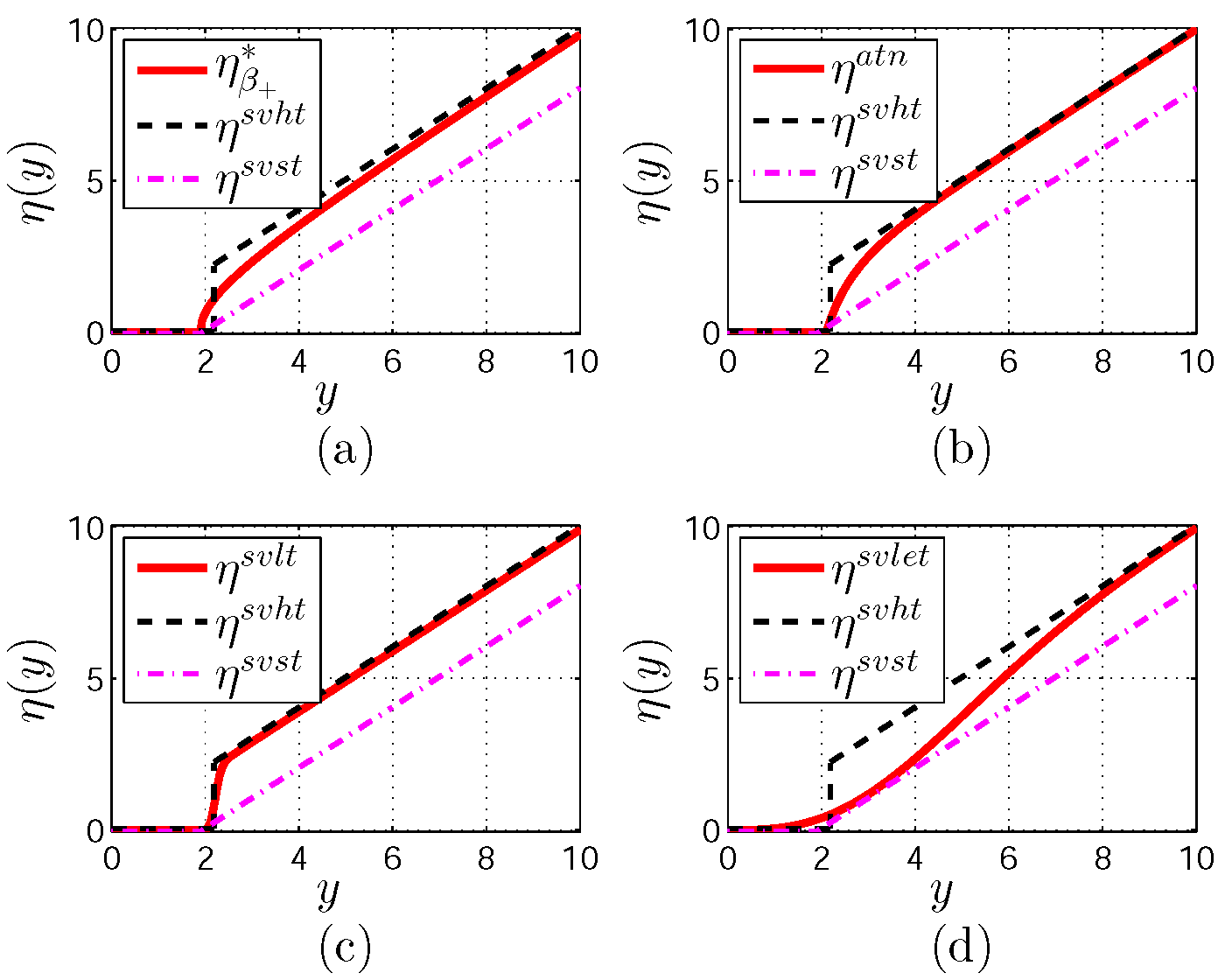}
    \caption{Comparison of existing singular value shrinkage functions. A typical example of (a) asymptotically optimal bulk-shrinkage function~\cite{GavishDonoho2014}, (b) ATN~\cite{JosseS16} ($\tau=2.2,\gamma = 5$), (c) SVLT~\cite{SVLT2015} ($p_1=0.8, p_2=2.2, p_3=0.1$), and (d) proposed SVLET shrinkage functions ($\mathbf{a}=[1,-1]^{\smaller T}, K=2, T=3$) plotted with SVHT ($\mu=2.18$) and SVST ($\lambda=1.98$). The values in the braces are the parameters of corresponding shrinkage function.}
    \label{fig:shrinker-plots}
    \end{figure}	
    \subsection{Proposed Singular Value Shrinkage Function} What hinders in obtaining closed form optima of SURE is the necessary nonlinearity in the shrinkage functions employed in existing estimators. This motivated us to linearly parameterize the nonlinear shrinkage function as
    \begin{equation}
    \eta(y) = \sum_{k=1}^{K} a_k\phi_k(y) = \mathbf{\Phi}^{\mathsmaller T}\mathbf{a}
    \label{eqn:PropShrink}
    \end{equation}
    where $\mathbf{a}= [a_1, a_2, ..., a_K]^{\mathsmaller T}$ is the shrinkage parameter vector, $\mathbf{\Phi} = [\phi_{1}(y), \phi_{2}(y), ...,\phi_K(y)]^{\mathsmaller T}$ is hitherto unspecified  bases, and $K$ is the order of linearization. Plugging this in (\ref{eqn:gen_sure}) and using (\ref{eqn:div}), the optimal value of parameter  $\mathbf{a}$ of the proposed shrinkage function can be obtained by minimizing SURE in closed form as given below. Using the gradient method to find the optimal value of $a_k$, we can write
    \begin{align}
    0& = \frac{1}{2}\frac{\partial}{\partial a_k}~\text{SURE}_{\eta}(\mathbf{Y})  \nonumber \\
    & = \underbrace{\frac{1}{2}\frac{\partial}{\partial a_k}~\|\mathbf{Y}-F_{\eta}(\mathbf{Y})\|_{F}^{2}}_{\text{R}_{k,1}}~+~\sigma^2 \underbrace{\frac{\partial}{{\partial a_k}}~\mbox{div} (F_{\eta}(\mathbf{Y}))}_{\text{R}_{k,2}}.
    \label{eqn:R1plusR2}
    \end{align}
    Solving for $\text{R}_{k,1}$ and $\text{R}_{k,2}$ as,
    \begin{align}
    \text{R}_{k,1}& = \sum_{i=1}^{L}\frac{1}{2}\frac{\partial}{\partial a_k} \Big(y_i-\sum_{k=1}^{K} a_k\phi_k(y)\Big)^2 \tag{Using (\ref{eqn:PropShrink})}~ \nonumber \\
    	           & = \sum_{l=1}^{K}\sum_{i=1}^{L} \phi_k(y_i)\phi_l(y_i)a_l-\sum_{i=1}^{L}y_i\phi_k(y_i),\nonumber
    \end{align}
    \begin{align}
    \text{R}_{k,2}& = \sum_{i=1}^{L}\Big(\phi_k'(y_i) + |n-m|\frac{\phi_k(y_i)}{y_i}+2\mathop{\sum_{j=1,}^{L}}_{j\neq i}\frac{y_i \phi_k(y_i)}{y_i^2-y_j^2}\Big) \nonumber
    \end{align}
    and collecting the terms in (\ref{eqn:R1plusR2}), we have
    \begin{align}
    	 0& = \sum_{l=1}^{K}\underbrace{\sum_{i=1}^{L}\phi_k(y_i)\phi_l(y_i)}_{M_{k,l}}~a_l-c_k,
    \label{eqn:collect-in-it}
    \end{align}
    where
    \begin{align}
    c_k = \sum_{i=1}^{L}\Big(y_i-\frac{|n-m|\sigma^2}{y_i}- \mathop{\sum_{j=1,j\neq i}^{L}}\frac{2~\sigma^2y_i}{y_i^2-y_j^2}\Big)\phi_k(y_i)-\sum_{i=1}^{L}\sigma^2\phi_k'(y_i)
    \label{eqn:equation_of_cK}
    \end{align}
    is free from any unknown parameter. Repeating the above differentiation for all $k\in[1,K]$ and collecting the corresponding equations, as in (\ref{eqn:collect-in-it}), in matrix form we have
    \begin{equation}
    0 = \mathbf{M}\mathbf{a}-\mathbf{c} \Longrightarrow \mathbf{a}_{opt} = \mathbf{M}^{-1}\mathbf{c},
    \label{eqn:optimal_a}
    \end{equation}
    where
    \[\mathbf{M} =
    \begin{bmatrix}
    \sum_{i=1}^{L}\phi_{1}(y_i)\phi_{1}(y_i) & \cdots & \sum_{i=1}^{L}\phi_{1}(y_i)\phi_{K}(y_i)\\
    	\vdots & \ddots & \vdots \\
    	 \sum_{i=1}^{L}\phi_{K}(y_i)\phi_{1}(y_i) & \cdots & \sum_{i=1}^{L}\phi_{K}(y_i)\phi_{K}(y_i)
    \end{bmatrix}\]
    and  $\mathbf{c}= [c_1, c_2, ..., c_K]^{\mathsmaller T}$. Thus, for a given $K$ and $\phi_k(\cdot)$, finding the value of shrinkage parameter vector $\mathbf{a}$ has reduced to solving a system of linear equations. Note that the order of linearization ($K$) can be arbitrarily increased, as long as matrix $\mathbf{M}$ is invertible, providing sufficient flexibility on the shape of shrinkage function. What remains still unspecified is the choice of basis function $\phi_{k}(\cdot)$. Authors in \cite{LuisierBluUnser2007} applied similar shrinkage function on multiscale DWT coefficients and named it as SURELET. On experimenting over several bases, they found the derivative of Gaussian (DOG) to be the best choice. In this work, we too adopt the DOG as the basis. The functional form of DOG is
    \begin{equation}
    \phi_k(y) = y e^{-(k-1)\frac{y^2}{2T^2}}
    \end{equation}
     where $T$ is a constant that depends on the strength of noise, i.e., $T=C\sigma$ with $C>0$ being a constant. The value of constant $C$ determines a smooth transition between noise dominated and signal dominated singular values.  It depends on the dynamic range of singular values of underlying noise free signal matrix and hence can be experimentally fixed for a data set. In Fig.\ref{fig:shrinker-plots}(d), a typical example of the proposed shrinkage function with DOG basis is compared with few existing shrinkage functions. Note that, unlike existing shrinkage functions, the proposed shrinkage function does not abruptly truncate the singular values, thus providing a better trade-off between the bias and the variance. For larger singular values, it is able to approximate the SVHT well, hence introduces a lower bias while keeping all the advantages of the SVST intact in case of lower singular values. As it does not necessarily truncate the smaller singular values to achieve the bias-variance trade-off, it can denoise matrices of any rank. In the rest of the paper, we term the proposed estimator with such linearization of shrinkage function as SVLET to differentiate it from the SURELET. Further, we denote the optimal SVLET shrinkage function and the corresponding shrinkage estimator as $\eta^{\text{svlet}}(y)$ and $\mathbf{\widehat{X}}_{\mathbf{a},T,K}^{\text{SVLET}}$, respectively.

    \subsection{Asymptotic Behavior of SVLET}
    In this subsection, we analyze the behavior of SVLET in asymptotic setting. In the following, we show that the proposed SVLET boils down to the asymptotically optimal estimator (\ref{eqn:optimal-shrink-2}) proposed in \cite{Shabalin2013,GavishDonoho2014}. For that, we apply the asymptotic conditions ($\sigma=1/\sqrt{m}$, and $n/m\rightarrow\beta$, with $0<\beta\leq1$ as $n\rightarrow\infty$) on SURE in (\ref{eqn:gen_sure}). Using Theorem~\ref{theorem-div}, an asymptotic SURE generated by bulk-shrinker $\eta_{\beta_+}$ can be given as
   \begin{align}
    \text{ASURE}_{\eta_{\beta_+}}(\mathbf{Y})&=\underset{n\rightarrow\infty}\lim \text{SURE}_{\eta_{\beta_+}}(\mathbf{Y}) \nonumber\\
     & =\underset{n\rightarrow\infty} \lim\Big\{-n + \sum_{i=1}^{r^*}\Big(y_i-\eta_{\beta_+}(y_i)\Big)^2+\frac{2}{m}~\mbox{div}(F_{\eta_{\beta_+}}(\mathbf{Y}))\Big\}, \nonumber\\
     & = \underset{n\rightarrow\infty}\lim (-n) + \sum_{i=1}^{r^*} \Big(y_i-\eta_{\beta_+}(y_i)\Big)^2+\sum_{i=1}^{r^*}\Big(2(1-\beta)\frac{\eta_{\beta_+}(y_i)}{y_i} +4\beta\underset{n\rightarrow\infty}\lim\frac{1}{n}\sum_{j=1,i\neq j}^{r^*} \frac{y_i\eta_{\beta_+}(y_i)}{y_i^2-y_j^2}\Big).
     \label{eqn:asym-sure-1}
    \end{align}
    Note that, in the above we used $\underset{n\rightarrow\infty}\lim\frac{\eta'_{\beta_+}(y_i)}{m} \stackrel{a.s.}{=} 0$ as $m=\beta^{-1}n$. Further, we need the following lemma.
    \begin{lemma}\cite{Bigot2016} \label{asym-div} In asymptotic setting of Lemma~\ref{sv-asyloc}, for $1\leq i \leq r^*$ and $y_i>\beta_+$, we have, almost surely,
    \begin{equation}
    \underset{n\rightarrow\infty} \lim \frac{1}{n}\sum_{j=1;j\neq i}^{n}\frac{y_i}{y^2_i-y^2_j} = \frac{1}{\rho(x_i)}\Big(1+\frac{1}{x^2_i}\Big).
    \end{equation}
    \end{lemma}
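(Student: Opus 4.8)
The plan is to replace the empirical average by an integral against the limiting singular value law and then to evaluate that integral through the Marchenko--Pastur Stieltjes transform. First I would observe that, since $\mathbf{X}$ has fixed finite rank $r$, the matrix $\mathbf{Y}=\mathbf{X}+\mathbf{W}$ is a finite-rank perturbation of $\mathbf{W}$, so by Lemma~\ref{sv-dist} the empirical distribution of the singular values $\{y_j\}_{j=1}^{n}$ still converges almost surely to the generalized quarter-circle law $\mu_W$ with density $f$ supported on $[\beta_-,\beta_+]$; the only difference from the noise-only case is the appearance of $r$ outliers located, by Lemma~\ref{sv-asyloc}, at $\rho(x_1),\dots,\rho(x_r)$. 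For the fixed index $i$ with $y_i>\beta_+$ we have $y_i\xrightarrow{a.s.}\rho(x_i)>\beta_+$ (Corollary~\ref{equivalence}), so the test function $w\mapsto \rho(x_i)/(\rho(x_i)^2-w^2)$ is continuous and bounded on $[\beta_-,\beta_+]$. The $r-1$ outlier terms $j\leq r$, $j\neq i$, together with the excluded term $j=i$, contribute $O(r/n)\to 0$ to the normalized sum, whence
\begin{equation}
\lim_{n\to\infty}\frac{1}{n}\sum_{j=1;j\neq i}^{n}\frac{y_i}{y_i^2-y_j^2}\stackrel{a.s.}{=}\int_{\beta_-}^{\beta_+}\frac{\rho(x_i)}{\rho(x_i)^2-w^2}\,f(w)\,dw=:\phi\big(\rho(x_i)\big).
\label{eq:sumint}
\end{equation}

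Next I would evaluate $\phi(\rho(x_i))$ in closed form. Substituting $\lambda=w^2$ turns $f(w)\,dw$ into the Marchenko--Pastur density of the squared singular values on $[\beta_-^2,\beta_+^2]$, so with $z=\rho(x_i)^2$ the integral in \eqref{eq:sumint} equals $-\rho(x_i)\,s(z)$, where $s(z)=\int(\lambda-z)^{-1}\,d\mu_{MP}(\lambda)$ is the Marchenko--Pastur Stieltjes transform. Using the self-consistent quadratic $\beta z\,s^2+(z+\beta-1)\,s+1=0$, whose branch points $z=\beta_\pm^2$ recover the support edges, and fixing the branch by the normalization $s(z)\sim -1/z$ as $z\to\infty$, gives
\begin{equation}
s(z)=\frac{-(z+\beta-1)+\sqrt{(z-1-\beta)^2-4\beta}}{2\beta z}.
\label{eq:mpstieltjes}
\end{equation}

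The crucial step is then purely algebraic. Writing $z=\rho(x_i)^2=x_i^2+1+\beta+\beta/x_i^2$ yields $z-1-\beta=x_i^2+\beta/x_i^2$, so the discriminant in \eqref{eq:mpstieltjes} collapses to the perfect square $(x_i^2-\beta/x_i^2)^2$; here the sign of its root is fixed by $x_i^4>\beta$, which is exactly the condition $y_i>\beta_+$ of Corollary~\ref{equivalence}. The numerator of $s(z)$ then reduces to $-2\beta(1+1/x_i^2)$, so that $s(z)=-(1+1/x_i^2)/\rho(x_i)^2$, and therefore $\phi(\rho(x_i))=-\rho(x_i)\,s(z)=\rho(x_i)^{-1}(1+1/x_i^2)$, which is the asserted identity.

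I expect the main obstacle to be twofold. The first part is making the passage from the sum to the integral in \eqref{eq:sumint} fully rigorous, in particular justifying that the finitely many signal outliers and the diagonal exclusion are asymptotically negligible and that the pole of the integrand stays strictly outside the limiting support. The second part is pinning down the correct branch of the square root in \eqref{eq:mpstieltjes} so that the perfect-square simplification produces the $+$ sign with the right overall magnitude. As an alternative to the explicit Marchenko--Pastur computation, one may identify $\phi$ directly with the Benaych-Georges--Nadakuditi transform used in \cite{benaych2012singular} to define $\rho$ in Lemma~\ref{sv-asyloc}, and read off the value from the defining relation $D_{\mu_W}(\rho(x_i))=1/x_i^2$ together with its companion identity; this bypasses re-deriving the closed form at the cost of quoting more of that machinery.
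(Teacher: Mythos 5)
A preliminary remark: the paper itself does not prove this lemma --- it is imported as a citation from \cite{Bigot2016} --- so there is no internal proof to compare against. Judged on its own merits, your derivation is correct, and it is in substance the standard random-matrix argument used for results of this type. Both pillars check out. The passage from the normalized sum to $\int_{\beta_-}^{\beta_+}\rho(x_i)/(\rho(x_i)^2-w^2)\,f(w)\,dw$ is legitimate: a fixed finite-rank perturbation does not alter the limiting empirical singular value distribution, the supercritical $y_i$ converges a.s.\ to $\rho(x_i)>\beta_+$ so the pole sits strictly outside the bulk, and the $O(r)$ outlier terms vanish after division by $n$. The Stieltjes-transform evaluation is also right: with $z=\rho(x_i)^2$ one gets $z-1-\beta=x_i^2+\beta/x_i^2$, the discriminant $(z-1-\beta)^2-4\beta$ collapses to $\bigl(x_i^2-\beta/x_i^2\bigr)^2$, the condition $x_i^4>\beta$ fixes the sign of its root, and the branch normalized by $s(z)\sim -1/z$ yields $s(z)=-(1+1/x_i^2)/\rho(x_i)^2$, hence $-\rho(x_i)\,s(z)=(1+1/x_i^2)/\rho(x_i)$ as claimed. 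I verified independently both the quadratic $\beta z s^2+(z+\beta-1)s+1=0$ and the equality of the two discriminant forms; moreover your answer is consistent with the Benaych-Georges--Nadakuditi relation $D(\rho(x))=\varphi(\rho)\bigl[\beta\varphi(\rho)+(1-\beta)/\rho\bigr]=1/x^2$, which $\varphi(\rho)=(1+1/x^2)/\rho$ indeed satisfies, so your proposed shortcut at the end is also sound.

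Two caveats are worth recording, both of which you partially flagged yourself. First, for the outlier terms $j\le r$, $j\ne i$ to contribute $O(r/n)$ you need $y_i^2-y_j^2$ bounded away from zero, i.e.\ the signal singular values distinct (the separation hypothesis already present in Lemma~\ref{sv-phase-tran}); with a repeated $x_i$ the individual term can grow with $n$ and a finer fluctuation estimate would be required. Second, the sum-to-integral step needs slightly more than weak convergence of the empirical distribution: one must also know that the largest non-outlier singular value converges a.s.\ to $\beta_+$, so that no bulk point approaches the pole of the test function; this is standard for finite-rank perturbations of Gaussian matrices and can be quoted from \cite{benaych2012singular}. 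Neither caveat undermines the argument; filling them in would make your proof a complete, self-contained substitute for the external citation.
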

    Lemma~\ref{asym-div} provides the recipe for the following theorem which guarantees that, in asymptotic setting, the SVLET estimator generated by the bulk-shrinker is optimal in two ways; firstly, it achieves the true mean square error and secondly, the resulting estimator turns out to be exactly same as the optimal bulk-shrinker given in (\ref{eqn:optimal-shrink-2}).
    \begin{theorem} \label{theorem-asymp_optim} In asymptotic setting, the SVLET estimator generated by the bulk-shrinker achieves the true mean-squared error estimator and results in an optimal shrinkage function. Specifically,
    \begin{align}
    \underset{\eta_{\beta_+}}{\operatorname{\min}}~\text{ASURE}_{\eta_{\beta_+}}(\mathbf{Y}) = \underset{\eta_{\beta_+}}{\operatorname{\min}}~\text{AMSE}_{\eta_{\beta_+}}(\mathbf{Y}). \label{eqn:part1}\\
    \mbox{Equivalently,}  ~~~~~~~  \underset{n\rightarrow\infty} \lim \eta_{\beta_+}(\mathbf{y}) = \mathbf{\eta}^*_{\beta_+}(\mathbf{y}) \label{eqn:part2}
    \end{align}
    \end{theorem}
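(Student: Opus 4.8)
The plan is to prove both assertions at once by minimizing the asymptotic risk functional $\text{ASURE}_{\eta_{\beta_+}}(\mathbf{Y})$ in closed form and showing that its minimizer coincides with the optimal bulk-shrinker $\eta^*_{\beta_+}$ of (\ref{eqn:optimal-shrink-2}). First I would start from the expression for $\text{ASURE}_{\eta_{\beta_+}}(\mathbf{Y})$ already derived in (\ref{eqn:asym-sure-1}) and dispose of the double-sum term. Its inner summation over $j$ is dominated by the $O(n)$ bulk indices, the finitely many signal indices $j\leq r^*$ contributing a vanishing amount after the $1/n$ scaling; hence Lemma~\ref{asym-div} applies and, together with $y_i\xrightarrow{a.s.}\rho(x_i)$ from Lemma~\ref{sv-asyloc}, the cross term collapses to $4\beta\,\eta_{\beta_+}(y_i)\,\rho(x_i)^{-1}\big(1+x_i^{-2}\big)$.

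After this substitution the $\eta$-dependent part of $\text{ASURE}_{\eta_{\beta_+}}(\mathbf{Y})$ becomes a finite sum $\sum_{i=1}^{r^*} g_i\big(\eta_{\beta_+}(y_i)\big)$ in which each summand depends on a single scalar value $\eta_i:=\eta_{\beta_+}(y_i)$,
\[
g_i(\eta_i) = (y_i-\eta_i)^2 + 2(1-\beta)\frac{\eta_i}{y_i} + 4\beta\,\eta_i\,\frac{1}{\rho(x_i)}\Big(1+\frac{1}{x_i^2}\Big).
\]
The crucial observation is that the values $\{\eta_{\beta_+}(y_i)\}_{i=1}^{r^*}$ are unconstrained and mutually decoupled, so the minimization separates into $r^*$ independent scalar problems. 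Each $g_i$ is a strictly convex quadratic in $\eta_i$ (leading coefficient $+1$), so setting $g_i'(\eta_i)=0$ yields the unique global minimizer
\[
\eta_{\beta_+}(y_i) = y_i - \frac{1-\beta}{y_i} - \frac{2\beta}{\rho(x_i)}\Big(1+\frac{1}{x_i^2}\Big).
\]

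To finish (\ref{eqn:part2}) I would substitute $y_i=\rho(x_i)$ and expand $\rho(x_i)^2 = x_i^2+1+\beta+\beta/x_i^2$ from (\ref{eqn:rho-y-relation}); the constant and $x_i^{-2}$ terms cancel, leaving $\eta_{\beta_+}(y_i)=\big(x_i^2-\beta/x_i^2\big)/\rho(x_i)$. The algebraic identity $(y_i^2-\beta-1)^2-4\beta=\big(x_i^2-\beta/x_i^2\big)^2$, valid whenever $y_i=\rho(x_i)$, together with the positivity $x_i^2-\beta/x_i^2>0$ guaranteed by $x_i>\beta^{1/4}$ (equivalently $y_i>\beta_+$ by Corollary~\ref{equivalence}), identifies this expression with $\eta^*_{\beta_+}(y_i)=\sqrt{(y_i^2-\beta-1)^2-4\beta}/y_i$, i.e. (\ref{eqn:optimal-shrink-2}). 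For (\ref{eqn:part1}) I would invoke that $\text{SURE}_{\eta}(\mathbf{Y})$ is an unbiased estimator of the MSE and that, under the asymptotic conditions, both quantities concentrate almost surely on deterministic limits; hence $\text{ASURE}_{\eta_{\beta_+}}=\text{AMSE}_{\eta_{\beta_+}}$ as functionals of the bulk-shrinker, so their minima and minimizers agree, and the common minimizer is the AMSE-optimal $\eta^*_{\beta_+}$ of (\ref{eqn:optimal-shrink-1}).

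The hard part will be the reduction in the first two paragraphs: carefully justifying that Lemma~\ref{asym-div} (whose inner sum runs to $n$) may be applied to the term in (\ref{eqn:asym-sure-1}), and that the limit can be pushed inside the fixed finite outer sum and exchanged with the minimization. This rests on the almost-sure convergence of the $y_i$ and of the empirical Stieltjes-type sum, and on the observation that the coupling between distinct values $\eta_{\beta_+}(y_i)$ disappears in the limit so that the per-index quadratics truly decouple. Once this decoupling is established, the remaining steps, the convex scalar minimization and the algebraic simplification above, are routine.
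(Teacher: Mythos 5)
Your algebra is sound, and your route is genuinely different from the paper's. The paper never performs a pointwise minimization: it stays inside the SVLET parameterization $\eta(y)=\sum_k a_k\phi_k(y)$, differentiates ASURE with respect to each coefficient $a_k$, and obtains normal equations $\mathbf{M}^{\infty}\mathbf{a}^{\infty}_{opt}=\mathbf{c}^{\infty}$ with $c^{\infty}_k=\sum_i\eta^*_{\beta_+}(y_i)\phi_k(y_i)$; it then repeats the differentiation on AMSE using Lemma~\ref{sv-phase-tran} and (\ref{eqn:optimal-shrink-1}), observes that the two sets of normal equations are identical, and deduces (\ref{eqn:part1}); finally (\ref{eqn:part2}) is obtained from the asserted identity $\mathbf{\Phi}^{\infty}(\mathbf{M}^{\infty})^{-1}\mathbf{\Phi}^{\infty\smaller T}=I$. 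Your decoupled scalar minimization lands on the same expression --- the paper's right-hand side $\frac{x_i^2(y_i^2-\beta-1)-2\beta}{x_i^2y_i}$ is exactly your $\bigl(x_i^2-\beta/x_i^2\bigr)/y_i$ --- and your identification with (\ref{eqn:optimal-shrink-2}) via $(y_i^2-\beta-1)^2-4\beta=\bigl(x_i^2-\beta/x_i^2\bigr)^2$ is the same algebra the paper performs with the substitution $b_i=y_i^2-\beta-1$. You also read the divergence term correctly: since $\eta_{\beta_+}(y_j)=0$ for $j>r^*$ only kills outer terms, the inner sum genuinely runs over all $n$ indices and is dominated by the bulk, which is what makes Lemma~\ref{asym-div} applicable (the upper limit $r^*$ on the inner sum in (\ref{eqn:asym-sure-1}) is a slip in the paper).

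There are, however, two genuine gaps relative to the theorem as stated. First, the theorem is about the \emph{SVLET} estimator: the shrinker is constrained to the $K$-dimensional span of the bases, so the values $\eta_{\beta_+}(y_1),\dots,\eta_{\beta_+}(y_{r^*})$ are \emph{not} unconstrained and mutually decoupled --- they are coupled through $\mathbf{a}$, and with $K=2\ll r^*$ the reachable value-vectors form a low-dimensional subspace. Your argument establishes that the unconstrained bulk-shrinker optimum is $\eta^*_{\beta_+}$; to say anything about the SVLET estimator you must additionally show this optimum is attainable within the family, i.e.\ that the vector $\bigl(\eta^*_{\beta_+}(y_i)\bigr)_{i\leq r^*}$ lies in the span of $\bigl(\phi_k(y_i)\bigr)_{i\leq r^*}$, $k=1,\dots,K$. (This representability is precisely what the paper's projection identity is invoked for, and that identity is itself valid only under the same assumption, so the weakness is shared --- but your proof cannot simply bypass the parameterization the theorem names.) Second, your justification of (\ref{eqn:part1}) --- ``SURE is unbiased and both quantities concentrate almost surely'' --- is an assertion, not a proof; almost-sure concentration of SURE on its expectation in this regime is exactly the kind of statement that requires the RMT lemmas. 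The paper avoids it by computing the AMSE stationarity conditions directly from Lemma~\ref{sv-phase-tran} and (\ref{eqn:optimal-shrink-1}) and checking that they coincide with the ASURE ones; you could repair your version the same way, by running your pointwise minimization on $\text{AMSE}_{\eta_{\beta_+}}$ itself, which needs no new machinery. (If made rigorous, your concentration argument would in fact be stronger than the paper's: matching stationarity conditions shows the \emph{minimizers} agree, whereas equality of the minimum \emph{values} in (\ref{eqn:part1}) needs the functionals to agree at the common minimizer, which is what concentration would supply.)
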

    \begin{proof} We first solve the left-hand-side of (\ref{eqn:part1}) for the minima of ASURE. At optima, we have
    \begin{equation}
     \begin{aligned}
     0& = \frac{1}{2}\frac{\partial}{\partial a^{\infty}_{k,opt}}\text{ASURE}_{\eta_{\beta_+}}(\mathbf{Y}).\nonumber \\
     \end{aligned}
     \end{equation}
     From (\ref{eqn:asym-sure-1}), we can write
     \begin{equation}
     \begin{aligned}
     &0 = \sum_{i=1}^{r^*}\Big\{\frac{1}{2}\frac{\partial}{\partial a^{\infty}_{k,opt}}\Big(y_i-\eta_{\beta_+}(y_i)\Big)^2 + \frac{(1-\beta)}{y_i}\frac{\partial \eta_{\beta_+}(y_i)}{\partial a^{\infty}_{k,opt}} + 2\beta \Big(\underset{n\rightarrow\infty} \lim \frac{1}{n}\sum_{j=1,i\neq j}^{r^*} \frac{y_i}{y_i^2-y_j^2}\Big)\frac{\partial \eta_{\beta_+}(y_i)}{\partial a^{\infty}_{k,opt}}\Big\}. \nonumber
     \end{aligned}
     \end{equation}
     Using Lemma~\ref{asym-div}, we have
     \begin{equation}
     \begin{aligned}
     0& = \sum_{i=1}^{r^*}\eta_{\beta_+}(y_i)\phi_k(y_i) + \sum_{i=1}^{r^*}\Big\{-y_i + \frac{(1-\beta)}{y_i} + \frac{2\beta}{\rho(x_i)}\Big(1+\frac{1}{x^2_i}\Big)\Big\}\phi_k(y_i) \nonumber\\
     \end{aligned}
     \end{equation}
     From Lemma~\ref{sv-asyloc}, for $y_i>\beta_+$, $y_i = \rho(x_i)$. Hence,
     \begin{equation}
     \begin{aligned}
     &\sum_{l=1}^{K}\underbrace{\sum_{i=1}^{r^*}\phi_k(y_i)\phi_l(y_i)}_{M^{\infty}_{k,l}}a^{\infty}_{l,opt}=\sum_{i=1}^{r^*}\frac{x_i^2 (y_i^2-\beta-1) -2 \beta}{x_i^2 y_i}\phi_k(y_i).\nonumber\\
     \end{aligned}
     \end{equation}
     Let $b_i=y_i^2-\beta-1$, then from (\ref{eqn:rho-y-relation}), $x_i^2 = \frac{b_i\pm \sqrt{b_i^2-4\beta}}{2}$
     \begin{equation}
     \begin{aligned} \sum_{l=1}^{K}M^{\infty}_{k,l}~a^{\infty}_{l,opt}=\sum_{i=1}^{r^*}\frac{1}{y_i}\sqrt{b_i^2-4\beta}~\phi_k(y_i).\nonumber\\
     \end{aligned}
     \end{equation}
     From (\ref{eqn:optimal-shrink-2}),
     \begin{equation}
     \begin{aligned}
     \sum_{l=1}^{K}M^{\infty}_{k,l}a^{\infty}_{l,opt}=\underbrace{\sum_{i=1}^{r^*}\eta^*_{\beta_+}(y_i)\phi_k(y_i)}_{c^{\infty}_{k}}.
     \end{aligned}
     \end{equation}
    Now, differentiating ASURE w.r.t. $a^{\infty}_{k,opt}$ for all $k$ and collecting the terms in matrix form yields
    \begin{equation}
    \mathbf{M}^{\infty}\mathbf{a}^{\infty}_{opt} = \mathbf{c}^{\infty}~~ \Longrightarrow \mathbf{a}^{\infty}_{opt} = (\mathbf{M}^{\infty})^{-1}\mathbf{c}^{\infty},
    \label{eqn:asure-minima}
    \end{equation}
    where $\mathbf{c}^{\infty} = (\mathbf{\Phi}^{\infty})^{\smaller T}\mathbf{\eta}^*_{\beta_+}(\mathbf{y})$. On solving (\ref{eqn:part1}) from the right-hand-side, we have
    \begin{equation}
    \begin{aligned}
    0& = \frac{1}{2}\frac{\partial}{\partial a^{\infty}_{k,opt}}\text{AMSE}_{\eta_{\beta_+}}(\mathbf{Y})\nonumber \\
    & = \sum_{i=1}^{r^*}\frac{1}{2}\frac{\partial}{\partial a^{\infty}_{k,opt}}\Big(x_i^2+\eta^2_{\beta_+}(y_i)\Big)^2-\underset{n\rightarrow\infty} \lim\sum_{i=1}^{r^*}\sum_{j=1}^{r^*}x_i\langle u_i,\tilde{u}_j\rangle\langle v_i,\tilde{v}_j\rangle\frac{\partial \eta_{\beta_+}(y_i)}{\partial a^{\infty}_{k,opt}}\nonumber\\
    \end{aligned}
    \end{equation}
    Using (\ref{sv-phase-tran}), we can write
    \begin{equation}
    \begin{aligned}
    0& = \sum_{l=1}^{K}\sum_{i=1}^{r^*}\phi_k(y_i)\phi_l(y_i)a^{\infty}_{l,opt}-\sum_{i=1}^{r^*}x_i\theta_u(x_i)\theta_v(x_i)\phi_k(y_i)\nonumber\\
    \end{aligned}
    \end{equation}
    Using (\ref{eqn:optimal-shrink-2}), we can write
    \begin{equation}
    \begin{aligned}
    0& = \sum_{l=1}^{K}\sum_{i=1}^{r^*}\phi_k(y_i)\phi_l(y_i)a^{\infty}_{l,opt}-\sum_{i=1}^{r^*}\eta^*_{\beta_+}(y_i)\phi_k(y_i)\nonumber\\
    \end{aligned}
    \end{equation}
    \begin{equation}
    \begin{aligned}
    &\sum_{l=1}^{K}\sum_{i=1}^{r^*}\phi_k(y_i)\phi_l(y_i)a^{\infty}_{l,opt}=\sum_{i=1}^{r^*}\eta^*_{\beta_+}(y_i)\phi_k(y_i)
    \end{aligned}
    \end{equation}
    Now, differentiating AMSE w.r.t.  $a^{\infty}_{k,opt}$ for all $k$ and collecting the terms in matrix form again gives (\ref{eqn:asure-minima}) hence (\ref{eqn:part1}) is proved. With this result, to prove (\ref{eqn:part2}), let us find the optimal bulk-shrinker for the SVLET estimator as
    \begin{align}
    \underset{n\rightarrow\infty} \lim \eta_{\beta_+}(\mathbf{y}) & = \sum_{i=1}^{r^*}\phi^{\infty}_k(y_i)a^{\infty}_{k, opt}\nonumber\\
    & = \mathbf{\Phi}^{\infty}\mathbf{a}^{\infty}_{opt}\nonumber\\
    & = \mathbf{\Phi}^{\infty}(\mathbf{M}^{\infty})^{-1}\mathbf{c}^{\infty}\nonumber\\
    & = \mathbf{\Phi}^{\infty}(\mathbf{M}^{\infty})^{-1}\mathbf{\Phi}^{\infty\smaller T}\mathbf{\eta}^*_{\beta_+}\nonumber\\
    & = \mathbf{\eta}^*_{\beta_+} \tag{Since, $\mathbf{\Phi}^{\infty}(\mathbf{M}^{\infty})^{-1}\mathbf{\Phi}^{\infty\smaller T}=I_n$}\nonumber
    \end{align}
    This completes the proof.
    \end{proof}
    \section{Results and Discussion}
    \label{sec:ExPerf}
    In this section, we empirically evaluate the performance of the proposed SVLET-based matrix estimation method through simulations and compare it with some of the recently proposed shrinkage methods. The empirical evaluation is primarily done on artificially generated matrices of different ranks. For that, an $r$-rank signal matrix is generated as $\mathbf{X}=\mathbf{L}\mathbf{R}^{\smaller T}$ where $\mathbf{L}$ and $\mathbf{R}$ are, respectively, $n\times r$ and $m\times r$ random matrices whose entries are independently drawn from zero mean unit variance Gaussian distribution. To obtain the corresponding observation matrix $\mathbf{Y}$, we add zero mean white Gaussian noise (AWGN) of known variances to each element of $\mathbf{X}$. Denoising is then performed on the observation matrix and the performance is measured in terms of normalized MSE
    \begin{equation}
    \text{NMSE} = \frac{1}{P}\sum_{p=1}^{P}\frac{\|\mathbf{\widehat{X}}^{(p)}_{\eta}-\mathbf{X}^{(p)}\|^2_F}{\|\mathbf{X}^{(p)}\|^2_F},
    \end{equation}
    where $P$ is the number of realizations. In this section, all the performances reported are averaged over $10$ noise realizations, i.e, $P=10$. Later, in this section, we test the efficacy of the SVLET in some real applications, also.

    \subsection{Sensitivity of the parameters}
    Before comparing the SVLET with a existing methods, we examine the effects of its two fixed parameters $C$ and $K$ on denoising performance. For that, we generate the test data of size $50\times 50$ with $1\leq r\leq 50$ as discussed above and noised at various signal-to-noise ratio (SNR) ranging from $0.5-4.0$. The SVLET with different values of $C$ and $K$ are employed for denoising. In Fig.~\ref{fig:CKvsNMSE}(a), the normalized MSE, averaged over all ranks and $1\leq K \leq 5$, is plotted against different values of $C$. In Fig.~\ref{fig:CKvsNMSE}(b), the normalized MSE, averaged over all ranks and $1\leq C \leq 20$, is plotted for $1\leq K \leq 5$. We see that increasing the value of $K$ beyond $2$ doesn't help much. The rationale behind this behavior is the sufficient adaptation of shrinkage function in capturing the singular value profile of original signal matrix by the parameter $\mathbf{a}$, only. Hence, the choice $C=10$ and $K=2$ are fair enough to be fixed in the rest of the experiments with this data set.
    \begin{figure}
    \includegraphics[width=\linewidth]{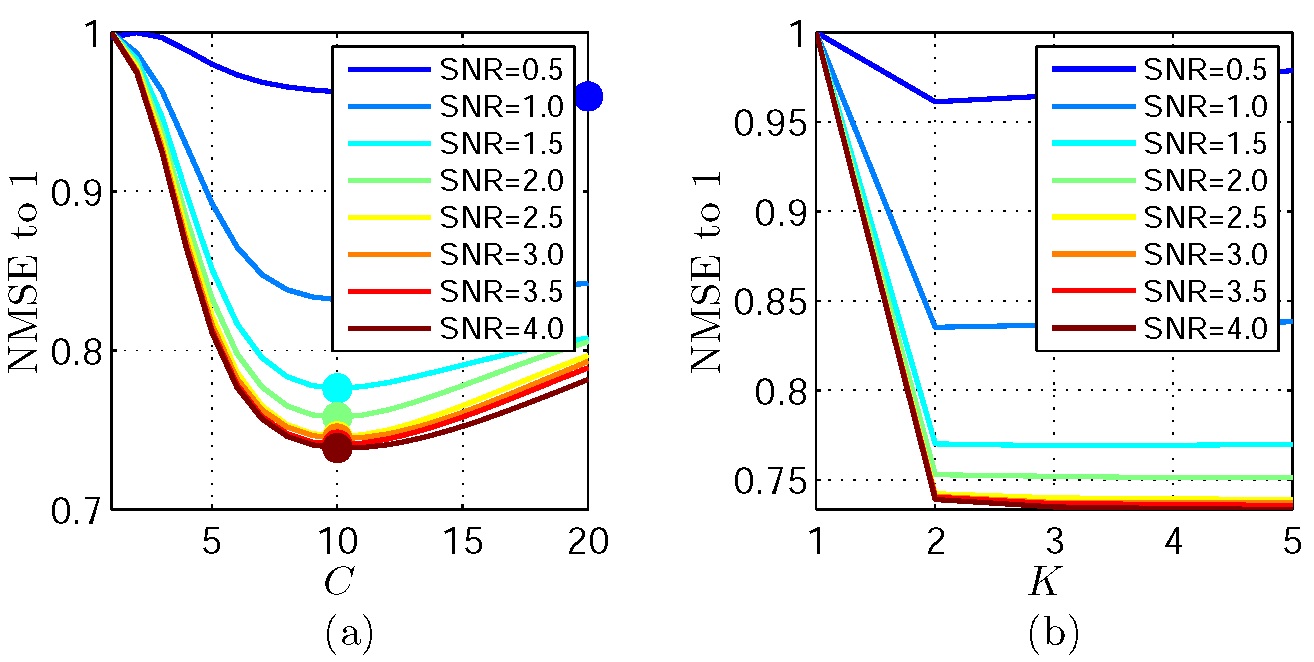}
    \caption{Sensitivity of the parameters of the proposed SVLET shrinkage estimator on the denoising performance. Effect on the NMSE of varying (a) $C$ and (b) $K$. In each case, the averaged NMSE computed over all possible ranks of the underlying clean matrix is reported.}
    \label{fig:CKvsNMSE}
    \end{figure}

    \subsection{Performance Evaluation on Matrices}
    In the following, the denoising performance of SVLET is compared with the state-of-the-art methods using the artificially generated data set as discussed above.
    \subsubsection{Comparison with Asymptotic Methods} The denoising performance of the SVLET is compared with that of the asymptotic estimators proposed in \cite{GavishDonoho2014,Gavish2014} for SNR = $0.5$, $1.0$, $2.0$, and $4.0$. As the RMT based methods are calibrated for standard deviation $\sigma = 1/\sqrt{n}$ of noise in asymptotic setting, we adjust the scale of given observed matrix before applying these estimator. Thus, the final estimate is obtained as $\sqrt{n}\sigma\mathbf{\widehat{X}}^*_{\beta_+}(\mathbf{Y}/(\sqrt{n}\sigma))$. For the SVHT, the optimal value of hard-threshold is $\mu^*=4/\sqrt{3}$ and for the SVST, we take $\lambda^* = (1+\sqrt{\beta})$ as suggested in~\cite{GavishDonoho2014}. From Fig.~\ref{fig:AsymPerf}, it can be seen that for comparatively lower ranks, where the asymptotic conditions hold, the performance of the SVLET and the optimal shrinkage coincide, as expected from theorem~\ref{theorem-asymp_optim}. As the rank increases the SVLET dominates over the RMT-based shrinkage estimators, at all SNR levels. At low SNR, the SVLET performs better than the SVHT and the SVST, even at very low ranks. This is because, the SVLET does not make any assumption about the rank of underlying signal matrix. Hence, the overall improved performance by the SVLET can be attributed to its better adaptability to the singular value structure of underlying signal matrix and corrupting noise.
    \begin{figure}
    \includegraphics[width=\linewidth]{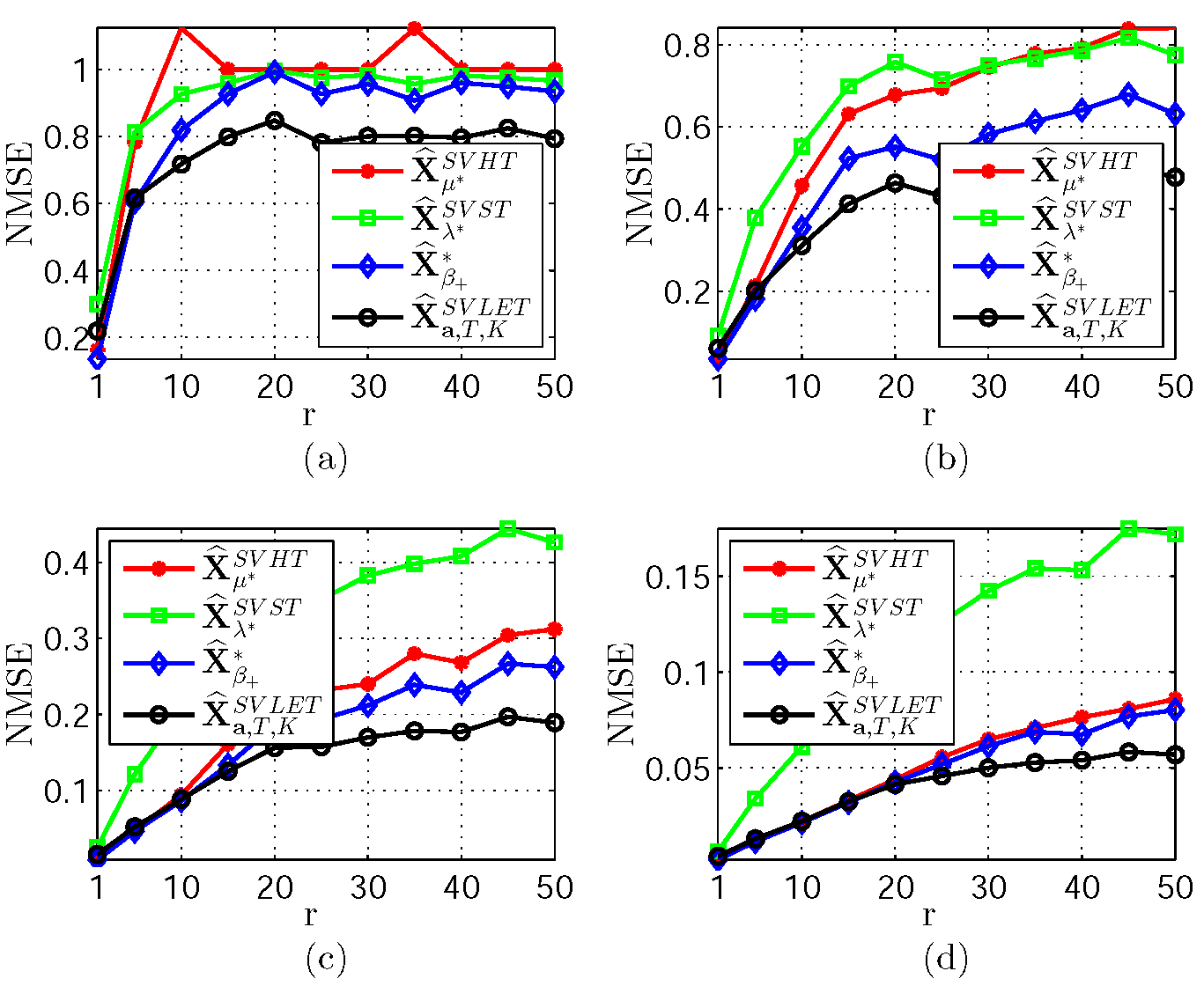}
    \caption{Comparison of average denoising performance of the proposed SVLET estimator with the asymptotic estimators for different ranks of $50\times 50$ matrix corrupted by AWGN at (a) SNR $=0.5$, (b) SNR $=1.0$, (c) SNR $=2.0$, and (d) SNR $=4.0$.}
    \label{fig:AsymPerf}
    \end{figure}
    \subsubsection{Comparison with SURE based Methods} Fig.~\ref{fig:SUREPerf} compares the SVLET with SURE based state-of-the-art shrinkage estimators for SNR = $0.5$, $1.0$, $1.5$, and $2.0$. For the SVST, the optimal soft-threshold value $\lambda$ is searched at $100$ equally spaced grids to solve (\ref{eqn:lambda-tune}). For the ATN and the SVLT, the optimal values of parameters are obtained by minimizing SURE, in the range given in Table~\ref{table:timecomp}. From Fig.~\ref{fig:SUREPerf}(a), it is clear that the SVLET outperforms all the methods, especially when the rank of underlying signal matrix is large and the noise level is high. Figs.~\ref{fig:SUREPerf}(b) and \ref{fig:SUREPerf}(d) shows the empirical optimality of the SVLET in low noise cases.


     All the methods are implemented in Matlab 8.3 (R2014a), running on Intel-i3 processor with Window 8.1 (4GB RAM) machine. Although CPU time is not a standard measure of computational cost, we use it to roughly compare the execution-time of the SVLET with the contrast methods. In this setup, the average execution-time for denoising a $50\times 50$ matrix is compared in Table~\ref{table:timecomp} for different methods. Note that, as the number of parameters increases execution-time also increases. The SVLET is the fastest one as it solves the system of linear equation for determining the optimal value of data dependent parameters.

    \begin{figure}
    \includegraphics[width=\linewidth]{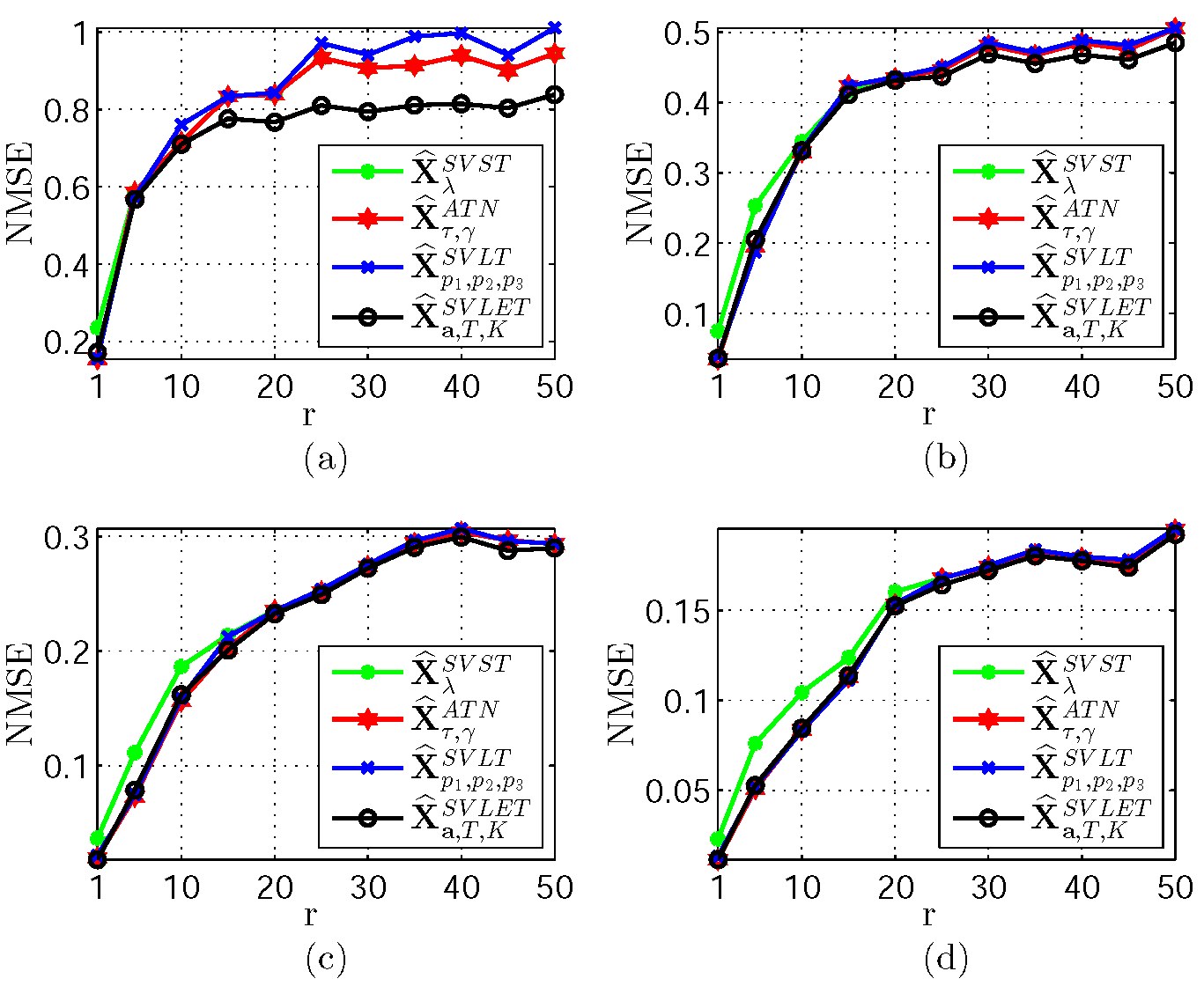}
    \caption{Comparison of average denoising performance of the proposed SVLET estimator with SURE based contrast estimators for different ranks of $50\times 50$ matrix corrupted by AWGN at (a) SNR $=0.5$, (b) SNR $=1.0$, (c) SNR $=1.5$, and (d) SNR $=2.0$.}
    \label{fig:SUREPerf}
    \end{figure}
    \begin{table}[ht]
    \begin{center}
    \caption{Comparison of the  execution time of different methods for estimating a $50\times 50$ matrix.}
    \begin{tabular}{|l||l|c|}
     \hline
     Methods & Parameter (Value/Range) & Time (in sec.) \\
     \hline \hline
     SVST &  $\lambda\in(0,0.5y_1)$ & 0.10\\ \hline
     \multirow{2}{*}{ATN}   &  $\lambda\in(0,0.5y_1)$,& \multirow{2}{*}{0.35}\\
                            & $\gamma\in[1,20]$ & \\ \hline
     \multirow{3}{*}{SVLT}  & $p_1=100$,    & \multirow{3}{*}{0.50}\\
                            & $p_2\in[1,n]$,&  \\
                            & $p_3\in(0,0.5y_1)$ & \\ \hline
     SVLET                  & $\mathbf{a}_{opt}$, $C=10$, $K=2$ & 0.04\\
     \hline
    \end{tabular}
    \label{table:timecomp}
    \end{center}
    \end{table}

	
    \section{Conclusion}
    \label{sec:Conclu}
    In this paper, we proposed a matrix estimation method by adaptive shrinkage of its singular values and shown its efficacy in recovering all rank matrices from their noisy observation. Unlike the existing methods, which select the optimal shrinkage parameters by exhaustive search, the proposed method finds one shot solution for the optimal parameters, thereby reducing the computational complexity drastically. The main contribution of the current work is in the proposition of a fast and adaptive shrinkage of the singular values based on the linear expansion of conventional singular value thresholding operator. We have theoretically shown that the proposed method is optimal in asymptotic framework as it achieves the true mean-square error and results in asymptotically optimal estimators obtained by the random matrix theory. Its denoising performance is as good as the asymptotically optimal estimators when rank of underlying signal matrix is very low, whereas it performs significantly better than the state-of-the-art methods if the rank of underlying signal matrix and the level of noise both are high. Applications of the proposed shrinkage estimator in real-time data such as in denoising of natural image and MRI sequence have also been investigated and the proposed SVLET is found to be better that the contrast methods. Exploration of SVLET in other applications and in the case of non-Gaussian noise are the courses of future research.


	\bibliographystyle{IEEEtran}
	\bibliography{referencefile}
	\end{document}